\newtheorem{theorem}[subsection]{Theorem}
\newtheorem{lemma}[subsection]{Lemma}
\newtheorem{corollary}[subsection]{Corollary}
\newtheorem{conjecture}[subsection]{Conjecture}
\theoremstyle{definition}
\newtheorem{definition}[subsection]{Definition}
\newtheorem{example}[subsection]{Example}
\title{Stable Homology-Based Cycle Centrality Measures}
\author{John Rick D. Manzanares$^{\ast}$}
\address{Department of Mathematics and Computer Science, University of the Philippines Baguio, Baguio City, Philippines 2600}
\email{jdmanzanares@up.edu.ph}
\thanks{$^{\ast}$Corresponding author}
\author{Paul Samuel P. Ignacio}
\address{Department of Mathematics and Computer Science, University of the Philippines Baguio, Baguio City, Philippines 2600}
\email{ppignacio@up.edu.ph}
\date{\today}
\begin{document}

\begin{abstract}
Network centrality measures play a crucial role in understanding graph structures, assessing the importance of nodes, paths, or cycles based on directed or reciprocal interactions encoded by vertices and edges. Estrada and Ross extended these measures to simplicial complexes to account for higher-order connections. In this work, we introduce novel centrality measures by leveraging algebraically-computable topological signatures of cycles and their homological persistence. We apply tools from algebraic topology to extract multiscale signatures within cycle spaces of weighted graphs, tracking homology generators persisting across a weight-induced filtration of simplicial complexes built over point clouds. This approach incorporates persistent signatures and merge information of homology classes along the filtration, quantifying cycle importance not only by geometric and topological significance but also by homological influence on other cycles. We demonstrate the stability of these measures under small perturbations using an appropriate metric to ensure robustness in practical applications. Finally, we apply these measures to fractal-like point clouds, revealing their capability to detect information consistent with, and possibly overlooked by, common topological summaries.
\end{abstract}

\maketitle

\section{Introduction}\label{sec:intro}

Many complex networks, such as social networks \cite{socnet} and telecommunication networks \cite{telnet} use graph-based centrality measures to determine the relative significance of nodes or cycles in the network. The derivations of the measures of central tendency in Statistics reflect the idea that a single value can represent the entire distribution of a data set. In particular, the mode of data sets is comparable to the degree centrality of graphs in that it projects importance through frequencies. Similarly, closeness centrality is akin to the median of data in the sense that it identifies nodes that are reachable via short paths from any node as "central" nodes.

Giscard and Wilson \cite{loopcentrality} introduced the loop-centrality measure that uses the number of walks that intersect a loop to measure its importance. They found that this centrality measure has the ability to detect essential proteins in protein-protein interaction networks. For the same purpose, Estrada and Ross \cite{centrality} explored an extension of this centrality measure to finite-dimensional substructures where higher-order connectivity networks (such as co-author \cite{coauthor} and social contact \cite{soccont} networks) are represented by triangles and tetrahedra. A widely accepted notion of importance for cycles in a simplicial complex regards long-lived cycles as essential features of data, while short-lived cycles that appear are likely to be due to noise or sampling errors. However, Bubenik et al. \cite{curvature} demonstrated that short-lived cycles hold important information that can be used to estimate the curvature of surfaces.

In this study, we propose novel centrality measures that leverage the persistence of homology classes and their merge history along the filtration. Integral to this is the development of an algorithm that captures the merge history of homology classes. These homology-based centrality measures produce, for all cycle generators, curves that reflect the relative importance of the corresponding generator throughout its entire evolution. By applying these centrality measures on toy models, we demonstrate the consistency of detected information by these measures to other topological summaries, and highlight its ability to capture new information possibly missed by other summaries. Accordingly, we establish some properties that include the stability of these measures under a distance analogous to norms in Lebesgue spaces and persistence landscapes.

\section{Preliminaries}\label{sec:prelims}

This section lays the groundwork for extracting multiscale topological signatures from weighted graphs. First, we model higher-order interactions among vertices in the graph using simplicial complexes, similar to the approach used in loop centrality \cite{centrality}. This captures interactions beyond simple pairwise connections. Next, we refine this model by applying a weight-based filtration. Filtration progressively removes less important edges, resulting in a nested sequence of increasingly coarser simplicial complexes. Finally, we leverage the concept of simplicial homology to analyze these filtered complexes. Simplicial homology allows us to identify and track cycle generators, which are loops within the complex that cannot be continuously deformed into a point. By focusing on these generators at each stage of the filtration, we can build a multi-scale record of the graph's overall topological structure. This is crucial to designing our cycle centrality measures later. Throughout this chapter, we shall use \cite{ctop} and \cite{ten} as standard references for discussions involving simplicial and persistent homology.

\subsection{Simplicial Complex}\label{subsec:simpcomp}

We dive into the concept of simplicial homology and its application in modeling graphs. Although graphs represent connections between pairs of nodes (edges), simplicial complexes offer a more versatile framework. They allow us to capture not only pairwise interactions but also higher-order relationships between multiple nodes using simplices of various dimensions. This ability to encode complex interactions makes them particularly suitable for analyzing graphs with intricate connections beyond simple edges.
	
\begin{definition}\label{def:asc}
An \textit{(abstract) simplicial complex} is a collection $\mathscr{C}$ of subsets of a finite set $V$ such that $\tau \subseteq \sigma \in \mathscr{C}$ implies $\tau \in \mathscr{C}$. An element $\sigma\in \mathscr{C}$ is called an \textit{(abstract) simplex} with \emph{dimension} $|\sigma|-1$, and the largest such dimension among all simplices in $\mathscr{C}$ defines the dimension of $\mathscr{C}$. 
\end{definition} 	

In the Euclidean space, lower-dimensional simplices are named as \textit{vertex}, \textit{edge}, \textit{triangle}, and \textit{tetrahedron} for a 0-simplex, a 1-simplex, a 2-simplex, and a 3-simplex, respectively. Furthermore, the higher-dimensional simplices are polytopes analogous to triangles and tetrahedra.

We can construct a simplicial complex based on a metric or a weight function. A metric-based example is the \textit{Vietoris-Rips complex} or the \textit{Rips complex}.

\begin{definition}\label{def:rips}
Let $\epsilon > 0$. Suppose that $(M, d)$ is a metric space and $P = \{p_i\}$ is a finite subset of $M$. The \textit{Vietoris-Rips complex} $\mathscr{R}_\epsilon(P)$ of $P$, with threshold $\epsilon$, is a simplicial complex whose $k$-simplices correspond to $(k+1)$-tuples of points in $P$ such that $d(p_i,p_j) \leq 2\epsilon$ for any pair of integers $(i ,j)$.
\end{definition}

Rips complexes form an $n$-simplex when every pair from the $n+1$ points is connected by a $1$-simplex or an edge. Figure \ref{fig:rips} illustrates a Rips complex of a point cloud with six initial points from the Euclidean space.

\begin{figure}[ht]
\centering
\includegraphics[width = 0.45\linewidth]{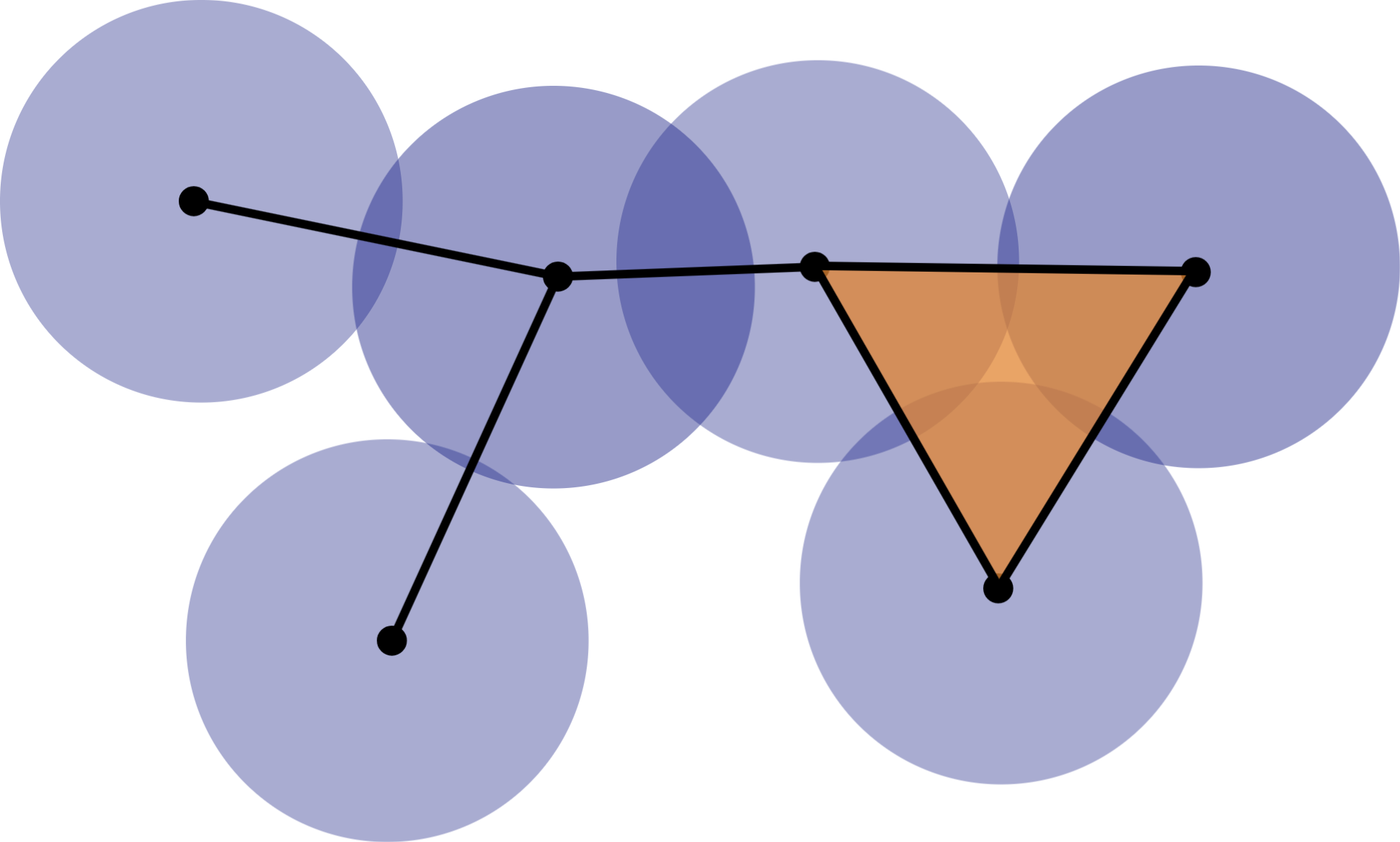}    
\caption{Rips complex of a point cloud}
\label{fig:rips}
\end{figure}

The simplices of an abstract simplicial complex provide the fundamental building blocks for constructing chains in chain space. Each $k$-simplex acts as a single unit within a $k$-chain. By combining these simplices with coefficients from $\mathbb{Z}/2\mathbb{Z}$ (which essentially encode the presence or absence of edges), we can create formal sums that represent more intricate relationships within the complex.

\begin{definition}\label{def:chains}
A \textit{$k$-chain} is a formal sum of $k$-simplices with coefficients coming from the field $\mathbb{Z}/2\mathbb{Z}$. The \emph{chain space}, denoted by $C_k$, is the free Abelian group generated by all possible $k$-simplices in the complex.
\end{definition}

These $k$-chains serve as the building blocks for studying the topological properties of the simplicial complex. However, not all $k$-chains represent actual cycles within the complex. To distinguish cycles from other chains, we introduce the concept of a boundary operator.

\begin{definition}\label{def:boundary_op}
    The \textit{boundary operator} $\partial_k : C_k \rightarrow C_{k-1}$ is the linear extension defined on the generators (individual $k$-simplices) given by
$$
\partial_k\left([x_0, \dots, x_j, \dots, x_n]\right) = \sum_{i = 0}^n[x_0,\dots,\hat{x_i},\dots,x_n]
$$
where $\hat{x_i}$ indicates that $x_i$ is omitted. 
\end{definition}

Mathematically, the boundary operator acts on a $k$-simplex by summing over all its $(k-1)$-dimensional faces. For each face, it creates a new $(k-1)$-dimensional chain with the opposite coefficient. Intuitively, this captures the idea that the boundary of a $k$-dimensional object is made up of its $(k-1)$-dimensional faces.

\begin{definition}\label{def:cycle}
The \emph{cycle space} $Z_k$ and \emph{boundary space} $B_k$ of $C_k$ are defined by
$$
Z_k = \ker \partial_k, \mbox{ and } B_k = Im\  \partial_{k+1}.
$$
We refer to elements of $Z_k$ as $k$-cycles and those of $B_k$ as $k$-boundaries.
\end{definition}

Intuitively, these chains in the cycle space represent closed loops ($k$-cycles) within the complex that cannot be continuously deformed into a single point. Meanwhile, the chains in the boundary space represent the edges or borders of higher-dimensional simplices.

Applying the boundary operator twice, we confirm that $\partial_k\circ \partial_{k+1} = 0$. This property implies that every element of the boundary space is also a cycle. 

\subsection{Simplicial Homology}\label{subsec:simphom}

Now that we can distinguish cycles and boundaries, we can introduce the concept of homology, which focuses on the essence of cycles, capturing their topological properties rather than their specific form.

\begin{definition}\label{def:homstuff}
Let $\mathscr{C}$ be a simplicial complex. The \textit{$k$th homology group} of $\mathscr{C}$, written as $H_k(\mathscr{C})$, is the quotient group $Z_k/B_k$. Two cycles in $Z_k(\mathscr{C})$ lying in the same homology class in this space are said to be \textit{homologous}. We refer to the rank of $H_k(\mathscr{C})$ as the \textit{$k$th Betti number} of $\mathscr{C}$.  
\end{definition}		

In other words, the homology group is formed by taking the cycle space and "glue" any cycles that differ only by a boundary. This captures the intrinsic topological features of the complex at dimension $k$, independent of the specific choices of representatives for these cycles. The Betti number provides a numerical measure of the number of independent holes of dimension $k$ in the complex.

By modding out the cycle space by the boundary space, we effectively exclude cycles that are simply the boundaries of higher-dimensional simplices. This ensures that we focus on nontrivial cycles that capture the true topological features of the complex. Lastly, homologous cycles represent $k$-dimensional loops within the complex that can be continuously deformed into each other.

\begin{figure}[ht]
\centering
\includegraphics[width = 0.6\linewidth]{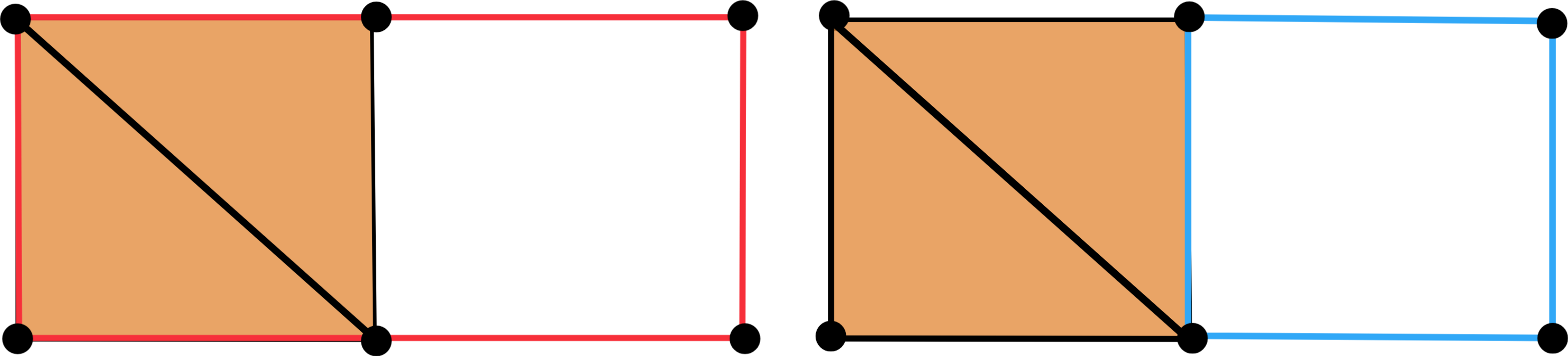}  
\caption{Two (red and blue) homologous 1-cycles}
\label{fig:homleft}
\end{figure}

The latter remark can be further visualized as follows. Consider two $1$-cycles colored red and blue in the simplicial complex shown in Figure \ref{fig:homleft}. These cycles may appear different, but if their difference can be expressed as the boundary of a $2$-simplex in the complex, they are considered homologous. This implies that these cycles can be continuously deformed, and any cycle may represent the homology class.

The generators of the homology group play a crucial role. They represent the distinct topological cycles embedded within our combinatorial model of the graph. Collectively, these cycles characterize the overall topology of the graph. Intriguingly, it is these generators and their corresponding homology classes that we leverage to define our centrality measures. However, a key point to remember is that equivalence within a homology class allows any generator to represent its class. For example, in Figure \ref{fig:homleft}, either the red or blue $1$-cycle can represent the same topological hole. This raises a natural question: How do we effectively choose representatives for these homology classes? This is where the concept of persistence comes into play, which we will explore in the next section.

\subsection{Filtrations and Persistent Homology}\label{subsec:pershom}

This section explores filtrations and persistent homology, powerful tools for analyzing topological features within simplicial complexes. These tools allow us to track how the underlying topological structure of a complex evolves as we progressively remove or modify its simplices based on some criteria.

We begin by introducing the concept of weights in simplicial complexes. A weight function $w:\mathscr{C} \to \mathbb{R}_{\geq 0}$ assigns weights (nonnegative real numbers) to each simplex $\sigma$ in the complex $\mathscr{C}$. This function allows us to differentiate between simplices and prioritize their removal during the filtration process. A common example of a weight function is the distance function, where the weight of a simplex might be defined as the maximum distance between any two points within that simplex. Higher weight values would then correspond to simplices with larger maximum distances, potentially representing less connected or more spread-out regions of the complex.

A filtration is a nested sequence of subcomplexes obtained by progressively removing simplices exceeding a certain weight threshold. Different weight order (e.g. $\geq$ order) leads to different filtration types. Intuitively, as the weight threshold increases, simplices with lower importance (based on the weight function) are removed. This reveals a coarser version of the original complex at each step, where only the most significant simplices remain.

This filtration process induces a sequence $\{\mathscr{C}_{w_i}\}$ of simplicial complexes. These complexes capture the connectivity structure of the original complex at each weight threshold. Persistent homology leverages this sequence to track the "persistence" of topological features (cycles) across different weight thresholds. In simpler terms, we are interested in how these features are born, die, or evolve as the filtration progresses.

\begin{definition}\label{def:pershom}
Let $\{\mathscr{C}_{w_i}\}$ be a filtered simplicial complex induced from a filtration of a simplicial complex $\mathscr{C}$. The \textit{$k$th $w_i-w_j$ persistent homology group} of $\mathscr{C}$, denoted by $H_k^{w_i, w_j}$, is the image of the induced homomorphism
$$
H_k(\mathscr{C}_{w_i}) \xrightarrow{\displaystyle{w_*}_k^{w_i, w_j}} H_k(\mathscr{C}_{w_j}).
$$
\end{definition}	

This group captures the homology classes "born" at weight $w_i$ and persisting (remaining in the homology group) until weight $w_j$. This group encodes how long homology classes survive under the filtration. For instance, a cycle that persists across a wide range of weight values suggests a more robust topological feature in the complex.

To facilitate future discussions, we introduce an abuse of notation where $\sigma$ can represent both a simplex and the corresponding cycle it defines. 

\begin{definition}\label{def:pers}
Suppose $\epsilon_0 < \epsilon < \epsilon^{\prime}$. We say that a homology class $[\sigma]$ is \textit{born} at $\mathscr{C}_{\epsilon}$ if $[\sigma] \in H_k^{\epsilon, \epsilon^{\prime}}$ but $[\sigma] \not\in H_k^{\epsilon_0, \epsilon^{\prime}}$. On the contrary, we say that $[\sigma]$ born at $\mathscr{C}_{\epsilon_0}$ \textit{dies} entering $\mathscr{C}_{\epsilon^{\prime}}$ if $w_k^{\epsilon_0, \epsilon^{\prime}}([\sigma]) = 0$ but $w_k^{\epsilon_0, \epsilon}([\sigma]) \neq 0$. Let $b(\sigma)$ and $d(\sigma)$, respectively, denote the birth and death of $[\sigma]$. We say that $\sigma_i$ \textit{gives birth to} the cycle $\sigma = \sum\sigma_i$ if $b(\sigma) = b(\sigma_i)$. 
\end{definition}

When two homology classes merge due to simplex removal, the elder rule comes into play. This rule selects the generator formed at the lower threshold as the natural representative of the merged class. A key observation is made: even when a different generator survives a merge, the persistence information can be "transferred" to the surviving one. This highlights the dynamics of merging classes and motivates further investigation into how these classes evolve under the filtration.

\begin{definition}\label{def:adjacent}
Let $\sigma$ and $\nu$ be $k$-cycles. We say that $\sigma$ and $\nu$ are \textit{$k$-near} if there exists a $k$-chain common to both $\sigma$ and $\nu$. Furthermore, the $k$th homology classes $[\sigma]$ and $[\nu]$ are \textit{$k$-near} if $\sigma$ and $\nu$ are $k$-near.
\end{definition}

The notion of $k$-nearness formalizes the idea that two cycles are close in the complex.

We also define the concept of merging governed by the elder rule. This definition clarifies the direction of merging, where a class with an earlier death threshold merges with the longer-lived class.

\begin{definition}\label{def:merge}
Let $[\sigma]$ and $[\nu]$ be distinct $k$th homology classes in a simplicial complex $\mathscr{C}_{\epsilon}$ where $d(\sigma) \neq d(\nu)$. We say that $[\sigma]$ and $[\nu]$ \textit{merge} at time $\epsilon^{\prime} = \min\left\{d(\sigma), d(\nu)\right\}$ if 
$$
\sigma + \nu = \rho
$$
for some $k$-boundary $\rho$.
\end{definition}

Within the framework of persistent homology, the merging of homology classes due to simplex removal is governed by the well-established \textit{elder rule}. This rule dictates that the generator formed at the lower filtration threshold becomes the designated representative for the newly formed merged class. Notably, the persistence information associated to the eliminated generator can still be "transferred" or "inherited" by the surviving one.

However, it is crucial to distinguish between the merging process and the act of gluing a boundary to a cycle. While merging signifies the disappearance of a topological feature upon simplex removal, gluing a boundary to a cycle essentially fills in the hole or closes the gap within the cycle. Although the resulting structure remains homologous, it falls outside the scope of current persistent homology methods. Addressing this distinction involves investigating the sequential gluing actions across the filtration. 

The dynamic nature of merging classes is underscored by the observed phenomenon of information transfer and the need to resolve boundary gluing. These observations prompt further inquiry into the evolution of these classes under the filtration process.

\subsection{Algorithm for Class Merges}

The computation of persistent homology often involves the algorithm introduced by Zomorodian \cite{phom}. This algorithm identifies the representatives (cycles) for each homology class. It mainly involves reducing the $k$th boundary matrix constructed by arranging the $(k - 1)$- and $k$-simplices in increasing births that comprise the rows and columns, and identifying the entries using the function
\[
\partial_k[i,j] = \begin{cases}1 & \text{ if } \sigma_i \text{ is a codimension-1 face of } \sigma_j \\ 0 & \text{ otherwise}\end{cases}.
\]  
The columns of the boundary matrix are ordered by birth.

A pseudocode of the algorithm \cite[p.~104]{ctop} is presented in Algorithm \ref{alg:phom}, where $l(i)$ denotes the position of the lowest $1$ in column $i$ of the boundary matrix.
\begin{algorithm}
\caption{Standard reduction algorithm for boundary matrices.}\label{alg:reduce}
\begin{algorithmic}[1]
\Require Boundary matrix of order $m$
\Ensure Reduced boundary matrix
\For{$j=2$ to $m$}
\While{there exists $i < j$ with $l(i) = l(j)$}
\State add column $i$ to column $j$
\EndWhile
\EndFor
\end{algorithmic}
\label{alg:phom}
\end{algorithm}

Let $c(\sigma)$ be the column associated with the simplex $\sigma$ in the boundary matrix. We introduce the ordering $\sigma \prec \delta$ to mean that either $b(\sigma) < b(\delta)$, or if $b(\sigma) = b(\delta)$ and $c(\sigma)$ appears first in the column of the boundary matrix. Additionally, let $r(\sigma)$ be the chain produced in $c(\sigma)$ by the reduction algorithm. 

\begin{lemma}\label{prop:unique}
Suppose that $[\sigma]$ is a $k$th homology class in a filtered simplicial complex $\mathscr{C}$. If $[\sigma]$ is not $k$-near to any homology class $[\nu]$, then the reduction algorithm produces a unique class representative for $[\sigma]$.
\end{lemma}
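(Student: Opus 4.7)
The plan is to show that under the stated hypothesis, the reduction algorithm applied to the $k$-th boundary matrix, equipped with the ordering $\prec$, produces a well-defined cycle representative for $[\sigma]$, and that non-$k$-nearness eliminates any admissible variation that could yield a different representative. Concretely, the representative of $[\sigma]$ that the algorithm returns has the form $\tilde{\sigma} = \sigma + \sum_i \sigma_{j_i}$, where each $\sigma_{j_i}$ is an earlier $k$-simplex whose column is added into $c(\sigma)$ to clear a common lowest-$1$ entry during reduction. Uniqueness of the representative then reduces to showing that this sequence of column additions is forced by the hypothesis.

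The key step is to argue that any deviation in the reduction sequence for $c(\sigma)$ would force $[\sigma]$ to share a $k$-chain with another homology class. Suppose for contradiction that some admissible reduction yields a distinct representative $\tilde{\sigma}'$ of the same class $[\sigma]$. Then $\tilde{\sigma} + \tilde{\sigma}'$ is a $k$-cycle lying in $B_k$, and I would analyze the symmetric difference of their supports to exhibit a $k$-chain shared between $\tilde{\sigma}$ and the representative of some other class $[\nu]$. By Definition \ref{def:adjacent}, this would make $[\sigma]$ and $[\nu]$ $k$-near, contradicting the hypothesis; hence $\tilde{\sigma} = \tilde{\sigma}'$.

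The main obstacle will be cleanly matching the algebraic relation $\tilde{\sigma} + \tilde{\sigma}' \in B_k$ with the combinatorial notion of $k$-nearness. Handling this requires a careful bookkeeping of how the ordering $\prec$ governs tie-breaking in the choice of lowest-$1$ matches, together with the observation that two valid reductions of $c(\sigma)$ can diverge only when multiple existing class representatives compete for the same lowest-$1$ position at some stage of the algorithm; that competition is precisely what $k$-nearness captures. Once this correspondence is made precise, the contradiction closes the argument and the uniqueness conclusion follows.
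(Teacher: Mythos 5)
Your plan leaves its central step unexecuted, and the mechanism you sketch for it does not obviously close. You propose: if two admissible reductions yield distinct representatives $\tilde{\sigma}$ and $\tilde{\sigma}'$, then $\tilde{\sigma}+\tilde{\sigma}'\in B_k$, and ``analyzing the symmetric difference'' will exhibit a $k$-chain shared with the representative of some \emph{other homology class} $[\nu]$, contradicting the hypothesis. But $\tilde{\sigma}+\tilde{\sigma}'$ being a $k$-boundary only tells you that the two outputs differ by a trivial class; a boundary is not one of the nontrivial persistent classes $[\nu]$ that the hypothesis quantifies over, so sharing simplices with it does not by itself contradict non-$k$-nearness. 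You acknowledge this is ``the main obstacle,'' but the lemma lives or dies exactly there, and you give no argument that the shared chain can always be routed through the representative of a genuine (nontrivial, born-elsewhere) class. A second, smaller issue: your claim that two valid reductions ``diverge only when multiple existing class representatives compete for the same lowest-$1$ position'' conflicts with the standard invariant of Algorithm~\ref{alg:reduce}, namely that already-reduced columns have pairwise distinct lowest ones, so the pivot column $i$ with $l(i)=l(j)$ is unique at each step; the only genuine source of ambiguity is the ordering of simplices with equal birth, which the paper fixes via $\prec$ rather than treats as a source of divergence.

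The paper's proof takes a more direct, combinatorial route that sidesteps the boundary-versus-nearness translation entirely. It splits on whether $\sigma$ and a candidate $\nu$ share a $(k-1)$-simplex: if not, no column of $\nu$ can ever have the same lowest one as a column of $\sigma$, so no addition is triggered and the representative is untouched; if they share $(k-1)$-faces but no $k$-simplex, then any contamination of $r(\sigma)$ by simplices of $\nu$ would have to include \emph{all} of $\nu$ (else the output is not a cycle), which is impossible because $c(\nu_j)$ has already been reduced to zero before $c(\sigma_i)$ is processed. If you want to salvage your approach, you would need to show that every column whose addition distinguishes $\tilde{\sigma}'$ from $\tilde{\sigma}$ carries simplices belonging to the representative of some nontrivial class born earlier, and that the triggering of that addition forces a shared $(k-1)$- or $k$-chain in the sense of Definition~\ref{def:adjacent}; that is essentially reconstructing the paper's case analysis, so as written the proposal has a genuine gap.
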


\begin{proof}
We assume that $\sigma = \sum_{i}\sigma_i$ and $\nu = \sum_{j}\nu_j$ are not $k$-near cycles where some $\sigma_i$ and $\nu_j$ respectively give birth to $\sigma$ and $\nu$. Furthermore, let $\nu_j \prec \sigma_i$. Note that the $k$th homology class representative depends on the associated $k$th boundary matrix, which is only concerned with simplices of dimensions $k - 1$ and $k$. Therefore, we can omit the case where $\sigma$ and $\nu$ has a common simplex of dimension $0$ to $k - 2$.

Suppose that $\sigma$ and $\nu$ have no shared $(k - 1)$-simplex. Then, for every simplex $\delta$, $l(\delta) \neq l(\sigma_i)$ for every integer $1 \leq i \leq n$. Therefore, the representative of $[\sigma]$ must be unique, since no simplex can be added with $\sigma_i$ to form another representative. 

Now, consider the case where $\sigma$ and $\nu$ intersect at some $(k - 1)$-faces. For the sake of contradiction, suppose that $\sigma$ has a representative containing some simplices from $\nu$. Since $\sigma$ and $\nu$ has no shared $k$-simplex, $r[\sigma]$ must contain all simplices of $\nu$. Otherwise, $r[\sigma]$ ceases to be a cycle if it does not contain all simplices of $\nu$. Note that, in the reduction algorithm, $c(\nu_j)$ reduces to a zero vector before reduction happens on $c(\sigma_i)$. Hence, $c(\sigma_i)$ cannot contain the simplex $\nu_j$ of $\nu$. Therefore, we arrive at a contradiction so $[\sigma]$ must have a unique class representative.
\end{proof}

Consider two $k$-near cycles $\sigma$ and $\nu$ that share a chain $\delta$. The shared simplices, along with simplices from $\sigma$ and $\nu$ that touch these shared simplices, can create situations where some simplices in $\sigma + \delta$ might have the same position of the lowest one as some simplices in $\nu + \delta$ and $\delta$. We call such connection points as \textit{junctures}.

\begin{lemma}\label{thm:reps}
Suppose that $[\sigma]$ and $[\nu]$ are distinct $k$th homology classes appearing in the filtration of a simplicial complex satisfying $\nu \prec \sigma$. Let $\nu = \sum_p\nu_p$ be a cycle representing $[\nu]$, where $\nu_p$ gives birth to $\nu$. Additionally, assume $\sigma$ and $\nu$ are $k$-near with common chain $\delta$. If $\nu_p \in \delta$ and $\nu$ does not contain simplices of $\sigma$ except $\delta$, then the representative of $[\sigma]$ produced by the reduction algorithm is given by $\sigma + \nu$. Otherwise, if $\nu_p \not\in \delta$ then the representative of $[\sigma]$ produced by the reduction algorithm is given by $\sigma$.
\end{lemma}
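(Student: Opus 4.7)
The plan is to simulate the standard reduction algorithm on column $c(\sigma_i)$ and determine $r(\sigma_i)$ in each case. Since $\nu \prec \sigma$, every $k$-simplex of $\nu$, and in particular $\nu_p$, has been processed by the time we reach $c(\sigma_i)$; moreover $c(\nu_p)$ has already reduced to the zero column while its accumulated chain is $r(\nu_p)=\nu$. A key general observation, paralleling Lemma~\ref{prop:unique}, is that the algorithm can only add columns that are non-zero at the time of the addition, so birth columns such as $c(\nu_p)$ can never be directly incorporated into $r(\sigma_i)$.

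For $\nu_p\in\delta$ with $\nu\cap\sigma=\delta$, the simplex $\nu_p$ lies in $\sigma$ and yet its column has reduced to zero. This blocks the naive reduction route that would output $r(\sigma_i)=\sigma$, because that route would have to add $c(\nu_p)$. I would then show that the juncture structure at $\delta$ forces the successive lowest 1's of $c(\sigma_i)$ to match precisely those of the reduced (death) columns attached to the simplices of $\nu\setminus\{\nu_p\}$; adding those columns pulls in the chain $\nu\setminus\{\nu_p\}$, and together with the columns of $\sigma\setminus(\delta\cup\{\sigma_i\})$ the accumulated chain becomes $(\sigma\setminus\delta)+(\nu\setminus\delta) = \sigma+\nu$. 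For $\nu_p\notin\delta$, the obstruction disappears because $\nu_p\notin\sigma$: every simplex of $\sigma\setminus\{\sigma_i\}$ produces a non-zero reduced column (by Lemma~\ref{prop:unique} applied away from the junctures, combined with the absence of any birth obstruction inside $\delta$), and their successive addition reduces $c(\sigma_i)$ to zero with $r(\sigma_i)=\sigma$. The alternative $\sigma+\nu$ is unreachable here because it would require adding $\nu_p$'s column, which is zero.

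The main obstacle is the juncture-based analysis in the first case: one must verify inductively that at each step the current lowest 1 of $c(\sigma_i)$ can only be cancelled by a column associated to $\nu\setminus\{\nu_p\}$ or $\sigma\setminus(\delta\cup\{\sigma_i\})$, and not by some unrelated earlier column. The hypothesis $\nu\cap\sigma=\delta$ is precisely what excludes extraneous interactions, and making this matching of lowest 1's rigorous appears to be the main technical task.
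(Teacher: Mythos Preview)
Your approach is essentially the same as the paper's: both hinge on the observation that $c(\nu_p)$ has already reduced to the zero column and therefore can never be added into $r(\sigma_i)$, and both then use a juncture argument to force the remaining simplices of $\nu$ into the accumulated chain, yielding $r(\sigma_i)=(\sigma+\delta)+(\nu+\delta)=\sigma+\nu$. The paper's written proof only treats the case $\nu_p\in\delta$ explicitly and leaves the second case implicit, whereas you sketch both; your description of the obstruction and of the inductive juncture matching is a somewhat more detailed version of the same reasoning the paper gives.
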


\begin{proof}
Suppose that $\sigma = \sum_{i}\sigma_i$ and $\nu$ are $k$-near with the intersection $\delta$, and $\nu_p \in \delta$. Let $\sigma_i$ be the simplex giving birth to $\sigma$. Note that $r(\sigma_i)$ only depends on the junctures and the algorithm proceeds by having $\sigma + \delta$ in $r(\sigma_i)$. If $\delta \in r(\sigma)$, then we arrive at a contradiction since $\nu_p \in \delta$ and $\nu_p$ gives birth to $\nu$ so $c(\nu_p)$ is a zero vector. 

If there exists no juncture where $\sigma_j \in \sigma + \delta$ and $\nu_q \in \nu$ such that $l(\sigma_j) = l(\nu_q)$, then $r(\sigma_i)$ ceases to be a cycle. Thus, there exists $\sigma_j \in \sigma + \delta$ and $\nu_q \in \nu$ such that $l(\sigma_j) = l(\nu_q)$. By the algorithm, $r(\nu_q) \in r(\sigma_i)$, and we have $\nu + \delta \in r(\sigma_i)$. Consequently, 
$$
r(\sigma_i) = (\sigma + \delta) + (\nu + \delta) = \sigma + \nu.
$$  
Therefore, we have proved the lemma.
\end{proof}

We are now able to provide a mechanism to track the merge information between homology classes and a criterion for when such happens. 

\begin{theorem}
\label{thm:merge}
Let $[\sigma]$ and $[\nu]$ be distinct persistent homology classes where $\sigma \prec \nu$. Then $[\sigma]$ merges with $[\nu]$ at $\epsilon = d(\sigma)$ if and only if $[\sigma]$ and $[\nu]$ are near.
\end{theorem}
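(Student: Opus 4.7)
The plan is to use Definition \ref{def:merge} to recast the theorem as the equivalence ``$\sigma + \nu$ is a $k$-boundary in $\mathscr{C}_{d(\sigma)}$ if and only if $\sigma$ and $\nu$ share a common $k$-chain,'' and then dispatch each direction with the reduction-algorithm machinery supplied by Lemmas \ref{prop:unique} and \ref{thm:reps}. Throughout, I track the death of $[\sigma]$ at $\epsilon = d(\sigma)$ through the column of the $(k+1)$-simplex responsible for it.

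For the forward direction I argue by contrapositive. Suppose $\sigma$ and $\nu$ are not $k$-near. Lemma \ref{prop:unique} then assigns each class a unique reduced representative whose $k$-simplex support is disjoint from the other's. Because every column addition in the reduction algorithm is driven by coinciding lowest ones, the absence of a common $k$-simplex forbids any finite sequence of column additions that could link the birth simplex of $\sigma$ with the column supporting $\nu$. Consequently, no $(k+1)$-simplex column reduces to a chain whose boundary is $\sigma + \nu$, which contradicts Definition \ref{def:merge}.

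For the reverse direction, assume $\sigma$ and $\nu$ are $k$-near with shared chain $\delta$. At time $d(\sigma)$, the $(k+1)$-simplex whose reduced column has lowest one at the birth simplex of $\sigma$ triggers the death of $[\sigma]$. Lemma \ref{thm:reps} then controls the reduced representative of the relevant class: the shared $\delta$ either leaves the representative intact or forces it to absorb $\nu$ through the juncture lying in $\delta$. In both cases, the reduced column of this killing $(k+1)$-simplex witnesses $\sigma + \nu$ as a $k$-boundary modulo boundaries already present, which by Definition \ref{def:merge} is precisely the merge of $[\sigma]$ with $[\nu]$ at $\epsilon = d(\sigma)$.

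I expect the forward direction to be the main obstacle, since $k$-nearness is a local combinatorial condition on $k$-simplex supports while merging is an a priori global algebraic condition. A candidate $(k+1)$-chain realizing $\sigma + \nu$ as a boundary could, in principle, bridge the two cycles through intermediate simplices shared with neither. My plan is to translate any such bridge into a sequence of column operations in the reduction algorithm and use the fact that each operation is driven by a juncture, ultimately forcing at least one $k$-simplex to lie in both supports and contradicting the assumed disjointness.
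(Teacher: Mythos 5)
Your overall architecture matches the paper's: the forward direction runs through Lemma~\ref{prop:unique} (unique, support-disjoint representatives when the classes are not near) and the reverse direction through Lemma~\ref{thm:reps}. The forward direction differs only in mechanism --- the paper concludes from disjointness that $\sigma+\nu$ being a boundary would force $d(\sigma)=d(\nu)$, contradicting Definition~\ref{def:merge}, whereas you argue inside the reduction algorithm that no column can ever produce $\sigma+\nu$ as a boundary. You correctly flag the ``bridging $(k+1)$-chain'' issue as the delicate point, and your plan for closing it is in the same spirit as the paper's uniqueness argument.

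The genuine gap is in your reverse direction. Lemma~\ref{thm:reps} has two cases, and you assert that ``in both cases'' the killing column witnesses $\sigma+\nu$ as a boundary. In the case $\nu_p \notin \delta$, the lemma says the reduced representative of $[\sigma]$ is $\sigma$ itself; its death at $d(\sigma)$ then exhibits $\sigma$, not $\sigma+\nu$, as a $k$-boundary. The hedge ``modulo boundaries already present'' does not repair this: if $\sigma \in B_k$ at $d(\sigma)$, then $\sigma+\nu$ is a boundary if and only if $\nu$ is, which is impossible since $d(\nu) > d(\sigma)$. So in that case Definition~\ref{def:merge} is not satisfied and no merge with $[\nu]$ occurs --- the class simply dies. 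The paper's proof uses only the first case of Lemma~\ref{thm:reps} ($\nu_p \in \delta$, representative $\sigma+\nu$), where the death of that representative directly gives $\sigma+\nu \in B_k$ and hence the merge. To repair your argument you must either restrict to that case or explain why the hypotheses of the theorem force the first-case configuration; as written, the second case is a counterexample to your step rather than an instance of it.
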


\begin{proof}
Suppose $[\sigma]$ merges with $[\nu]$. If $[\sigma]$ and $[\nu]$ are not near, then both classes have unique cycle representatives that are not near. Thus, if $\sigma + \nu$ is a boundary at $\epsilon$ then $d(\sigma) = d(\nu)$. This equation is a contradiction to the assumption that the classes merge. Therefore, $[\nu]$ and $[\sigma]$ must be near.

Conversely, suppose that $[\sigma]$ and $[\nu]$ are near. By Lemma \ref{thm:reps}, we can choose $\sigma + \nu$ as representative of $[\sigma]$. Observe that 
$$
(\sigma + \nu) + \nu = \sigma.
$$ 
Since $\sigma \prec \nu$, $\sigma$ must be a boundary at $\epsilon$. Therefore, $[\sigma]$ merges with $[\nu]$. 
\end{proof}

In addition to keeping track of the cycle representatives yielded by the reduction algorithm, we are also interested in integrating the \emph{snowball effect} of successive merging classes in persistent homology. This necessitates the following definition.

\begin{definition}
Consider the induced homomorphism
$$
H_k(\mathscr{C}_{w_i}) \xrightarrow{\displaystyle{w_*}_k^{w_i, w_j}} H_k(\mathscr{C}_{w_j})
$$
and let $[\sigma]\in H_k(\mathscr{C}_{w_j})$. We define the \textit{first order merge cluster} of $[\sigma]$ at $w_j$ as
$$
M_1[\sigma, w_j] := \displaystyle({w_*}_k^{w_i, w_j})^{-1} ([\sigma]),
$$
that is, the set of homology classes merging with $[\sigma]$ at threshold $w_j$.
Inductively, for every integer $n \geq 2$, the \textit{$n$th order merge cluster} of $[\sigma]$ at $w_j$ is defined as 
$$
M_n[\sigma,w_j] = \displaystyle\bigcup_{\tau \in M_{n-1}[\sigma,w_j]} M_{1}[\tau,w_j].
$$
\end{definition}

The first-order merge clusters refer to all other classes that merges with a specific class $[\sigma]$ across the filtration until the threshold $w_j$. Higher-order merge clusters build on this idea. Imagine a class that merges with another class, which itself merges with a third class. All these classes are considered part of the same merge cluster because their merges are ultimately connected. The definition captures this cascading effect by recursively defining higher-order merge clusters as the union of all first-order merge clusters of classes within the previous-order merge cluster.

Theorem~\ref{thm:merge} offers a way to determine if two classes merge at a specific threshold. However, it might involve checking many pairs of cycles representing the classes, which can be computationally expensive. This is because any homologous cycle can represent a class, leading to a large number of potential pairings to explore. Corollary~\ref{cor:3.4} aims to reduce the number of pairings that need to be checked for merging. It leverages the concept of $k$-nearness between cycles representing classes.

\begin{corollary}
\label{cor:3.4}
Suppose that the homology class $[\sigma]$ merges with $[\nu]$. Let $[\delta]$ be a homology class such that $\sigma \prec \delta \prec \nu$. If $[\nu]$ and $[\delta]$ are near then either $[\delta]$ merges with $[\sigma]$, or there exists a homology class that merges with $[\nu]$ whose $n$th order merge cluster contains $[\delta]$ for some integer $n \geq 1$.
\end{corollary}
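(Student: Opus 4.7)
My plan is to case-split on whether the cycles $\sigma$ and $\delta$ are $k$-near, using Theorem~\ref{thm:merge} in both directions to translate between nearness and merging. As setup, Theorem~\ref{thm:merge} applied to the hypothesized merge of $[\sigma]$ with $[\nu]$ yields a common $k$-chain $\gamma_1$ of $\sigma$ and $\nu$, while the hypothesis provides a common $k$-chain $\gamma_2$ of $\delta$ and $\nu$.

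If $\sigma$ and $\delta$ are $k$-near, Theorem~\ref{thm:merge} forces a merge between $[\sigma]$ and $[\delta]$. Since $[\sigma]$ is already paired with $[\nu]$ at $d(\sigma)$, this merge cannot also occur at $d(\sigma)$: otherwise $\sigma+\delta$ and $\sigma+\nu$ would both be boundaries at $d(\sigma)$, forcing $\delta+\nu$ to be a boundary and collapsing $[\delta]=[\nu]$, which contradicts their distinctness. Consequently the merge must happen at $d(\delta)<d(\sigma)$, so $[\delta]$ is the class absorbed, yielding the first alternative.

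In the complementary case, $\sigma$ and $\delta$ share no chain. Using $\delta\prec\nu$ and the nearness of $[\delta]$ with $[\nu]$, Theorem~\ref{thm:merge} still forces $[\delta]$ to merge with $[\nu]$ at $d(\delta)$, placing $[\delta]\in M_1[\nu,d(\delta)]$. To recast this in the form required by the second alternative, I would trace the reduction-algorithm representative of $[\delta]$ through Lemma~\ref{thm:reps}: because $[\sigma]$ has been absorbed into $[\nu]$ before $[\delta]$ dies, the effective representative of $[\nu]$ already carries $\sigma$, so the merge of $[\delta]$ factors through an intermediate class $[\tau]$ built from $\delta$ together with the evolved cycle of $[\nu]$. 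This $[\tau]$ itself merges with $[\nu]$, and iterating the recursive definition of $M_n$ gives $[\delta]\in M_n[\tau,w]$ for some $n\geq 1$.

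The main obstacle is the second case: cleanly exhibiting $[\tau]$ from the reduction algorithm and verifying that the resulting cascade matches the recursive structure of higher-order merge clusters once $[\sigma]$ has been absorbed. The first case reduces essentially to Theorem~\ref{thm:merge} together with the short death-time exclusion argument above.
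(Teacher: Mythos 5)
Your case split is on the wrong variable, and this causes the proposal to miss the situation the corollary's second alternative is actually designed for. The paper's argument splits on \emph{which class the reduction algorithm records $[\delta]$ as merging with}: either $[\delta]$ merges with $[\sigma]$ or $[\nu]$ outright (done), or $[\delta]$ merges with some intermediate class $[\lambda]$ with $\delta \prec \lambda \prec \nu$. In that last case one invokes Lemma~\ref{thm:reps} to see that the representative of $[\lambda]$ carries simplices of $\delta$, so nearness of $[\delta]$ and $[\nu]$ transfers to nearness of $[\lambda]$ and $[\nu]$; by Theorem~\ref{thm:merge}, $[\lambda]$ then merges with $[\nu]$, and since $[\delta]\in M_1[\lambda,\cdot]$ the second alternative holds with $n=1$. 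Your proposal never considers a third class $[\lambda]$ at all: in your second case you assume the nearness of $[\delta]$ and $[\nu]$ always produces a \emph{direct} merge, and then try to manufacture an intermediate witness $[\tau]$ "built from $\delta$ together with the evolved cycle of $[\nu]$." That object is not one of the persistent homology classes of the filtration, so it cannot serve as the class required by the second alternative, whose merge clusters are defined only over actual classes in $\Lambda$; you flag this as the "main obstacle," and it is indeed where the argument does not go through.

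There is also a flaw in your first case. From $\sigma+\delta$ and $\sigma+\nu$ both being boundaries at $d(\sigma)$ you deduce that $\delta+\nu$ is a boundary and call this a contradiction with the distinctness of $[\delta]$ and $[\nu]$. But two distinct persistent classes becoming homologous at some threshold is precisely what merging means, not a contradiction --- and the hypothesis that $[\delta]$ and $[\nu]$ are near already forces $\delta+\nu$ to become a boundary at $\min\{d(\delta),d(\nu)\}$ by Theorem~\ref{thm:merge}. So the exclusion argument does not rule out $d(\sigma)\le d(\delta)$, and the directional claim that $[\delta]$ (rather than $[\sigma]$) is the absorbed class is not established. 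To repair the proof, replace the nearness dichotomy with the paper's dichotomy on $[\delta]$'s recorded merge partner and use Lemma~\ref{thm:reps} to propagate nearness to that partner.
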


\begin{proof}
Suppose $[\nu]$ and $[\delta]$ are near. If $[\delta]$ merges with $[\nu]$, then we have proved the corollary. Suppose $[\delta]$ merges with a homology class $[\lambda]$ where $\delta \prec \lambda \prec \nu$. Note that a representative of $[\lambda]$ includes some simplices of $\delta$. Since $[\nu]$ and $[\delta]$ are near, $[\nu]$ and $[\lambda]$ must also be near. Therefore, $[\lambda]$ merges with $[\sigma]$.
\end{proof}

We can implement Theorem \ref{thm:merge} and Corollary \ref{cor:3.4} in an algorithm to obtain the first-order merge clusters of each homology class. Algorithm \ref{alg:merge} presents a pseudocode for finding the first order merge clusters.

\begin{algorithm}[ht]
\caption{Identifying First-Order Merge Clusters from Class Representatives.}
\label{alg:merge}
\begin{algorithmic}[1]
\Require Representatives $\left\{\sigma_i\right\}_{i = 1}^{l}$ ordered by ascending birth followed by death
\Ensure Collection containing first order merge clusters of the homology classes
\State Initialize an array $M$ of length $l$
\For{$i \in \{2,\dots,l\}$}
\For{$j \in \{1, \dots,i - 1\} \setminus \bigcup_{k = 1}^{i-1} M[k]$} 
\If{$[\sigma_i]$ is $k$-near to $[\sigma_j]$ and $d(\sigma_i) \geq d(\sigma_j) > b(\sigma_i)$} 
  \State insert $j$ to $M[i]$
\EndIf
\EndFor
\EndFor
\end{algorithmic}
\end{algorithm}

\section{Cycle Centrality}\label{sec:mainres}

This chapter explores the core of our study, defining centrality measures for cycles based on persistent homology. These measures go beyond a cycle's mere existence and capture its topological significance and influence within a network. We achieve this by considering two key aspects, namely persistence and merge dynamics. Our goal is to define centrality measures that are monotonic functions. This means that the importance of a cycle, as measured by our centrality, never decreases as the filtration progresses.

\subsection{Centrality Measures}

We propose three centrality functions, denoted by $J_1$, $J_2$, and $J_3$, to capture the evolving importance of cycles. Each function is defined for a specific homology class representative $[\sigma]$ and a filtration threshold $\epsilon$. All the functions share a common structure. They possess a base value of 0 before the birth threshold of $[\sigma]$, and exhibit a piece-wise linear behavior after birth, reflecting changes in importance.

The first centrality measure $J_1$ aims to capture and quantify the total persistence accumulated by $[\sigma]$ and all classes that directly merge with it up until the threshold $\epsilon$. Hence, if $P_{\epsilon}(\sigma)$ is the persistence of $\sigma$ at $\epsilon$, the first centrality function has the form

$$
J_1(\sigma, \epsilon) = \begin{cases} 0 & \text{for } \epsilon \leq b(\sigma) \\ 
P_{\epsilon}(\sigma) +\displaystyle\sum_{[\varsigma] \in M_1[\sigma, \epsilon]} P_{\epsilon}(\varsigma) & \text{for } \epsilon > b(\sigma) \\
\end{cases}.
$$

As each $P_{\epsilon}(\sigma)$ is monotonic and stabilizes when $\epsilon > d(\sigma)$, this function is piece-wise linear and monotonic. It captures the cascading simple aggregate of persistence pooled from cycle representatives that altogether merge directly to an older cycle. We view this as the homological importance of cycles --- if many cycles merge to an old cycle, then its homological significance is proportionally increased. 

One caveat of the function above is that it treats all merging instances similarly regardless of when the merge happens along the filtration. Hence, we also consider a second centrality function  $J_2$ by refining $J_1$ to account for the time when instances of merging happen. It introduces a scaling function that assigns a weight to the persistence of each merging class. This allows us to prioritize either early or late merges in the centrality calculation. We can write this function as 
$$
J_2(\sigma, \epsilon) = \begin{cases} 0 & \text{for } \epsilon \leq b(\sigma) \\ 
P_{\epsilon}(\sigma) + \displaystyle\sum_{[\varsigma] \in M_1[\sigma, \epsilon]} f_{\sigma}(\varsigma)P_{\epsilon}(\varsigma) & \text{for } \epsilon > b(\sigma) \\
\end{cases}.
$$
When early merges are considered more influential, we can define $f_{\sigma}(\varsigma)$ as $d(\varsigma)/d(\sigma)$. Conversely, when late merges are considered more important, we can define $f_{\sigma}(\varsigma)$ as $1-d(\varsigma)/d(\sigma)$.

We can also generalize the cascading effect of merging by considering the indirect transfer of persistence from merging instances prior to a given merge time. This is equivalent to modifying the centrality function to account for higher-order merging clusters, and yields a third centrality function $J_3$ given by
$$
J_3(\sigma, \epsilon) = \begin{cases} 0 & \text{for } \epsilon \leq b(\sigma) \\ 
P_{\epsilon}(\sigma) +\displaystyle\sum_r\displaystyle\sum_{[\varsigma] \in M_r[\sigma, \epsilon]} f_{\sigma}(\varsigma)P_{\epsilon}(\varsigma) & \text{for } \epsilon > b(\sigma) \\
\end{cases}.
$$

Here, we allow the definition $f_{\sigma}(\varsigma) = 1$ to generalize the function $J_1$.

In general, our centrality measures are of the form $J_n: \Lambda\times W\to \mathbb{R}_+$ where $\Lambda$ is the collection of all non-trivial persistent homology classes from a filtration. Hence, each persistence diagram produces a family of centrality functions. We can visualize the time-evolving centrality function of each homology class by constructing a piecewise-linear plot where sudden increases represent the effect of merging. We call these plots the $J_n$ \textit{centrality plots} of dimension $k$, where $k$ refers to the dimension of the homology classes we are concerned with. We can also create heat maps from contiguous heat bars each capturing the monotonic growth of centrality as captured by our measures. We illustrate a centrality plot in Figure \ref{fig:plotcircle} and omit the heat map representation. 

\begin{figure}[ht]
\centering
\begin{subfigure}{0.47\textwidth}
    \includegraphics[width=\linewidth]{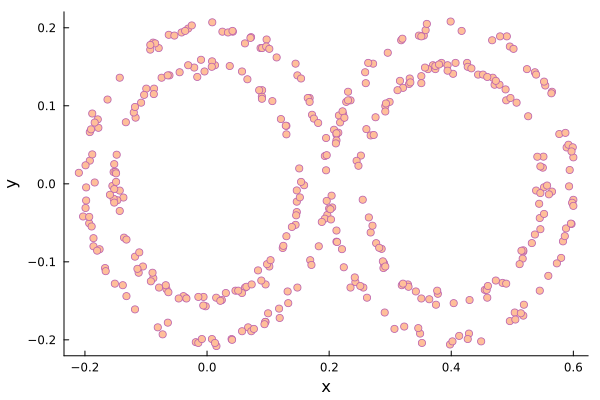}
    \caption{Point cloud}
    \label{fig:graph}
\end{subfigure}
\hfill
\begin{subfigure}{0.47\textwidth}
    \includegraphics[width=\linewidth]{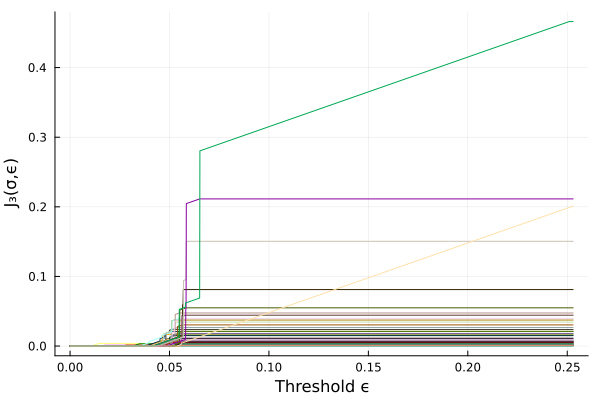}
    \caption{$J_3$ centrality plot}
    \label{fig:heat}
\end{subfigure}

\caption{The $J_3$ centrality plot, with $f_{\sigma} = 1$, of dimension $1$ produced by the Rips filtration of the point cloud sampled around a wedge sum of two annuli.}
\label{fig:plotcircle}
\end{figure}

\subsection{Stability of Centrality Measures}\label{subsec:stab}

This subsection investigates the stability of the centrality measures we defined earlier. Stability ensures that small changes in the network data, such as slight adjustments to edge weights, will not lead to drastic changes in the calculated centrality of cycles. 

First, we establish that all three centrality measures are monotonic. Intuitively, as we explore the network at a coarser scale, a hole might encounter more merging partners, potentially accumulating more persistence and thus increasing its centrality. In the following lemma, we denote $J_n(\sigma, \epsilon)$ by $J_{n,\sigma}(\epsilon)$.

\begin{lemma}[Monotonicity]\label{prop:mono}
If $\epsilon < \epsilon^{\prime}$, then $J_{n,\sigma}(\epsilon) \leq J_{n,\sigma}\left(\epsilon^{\prime}\right)$.
\end{lemma}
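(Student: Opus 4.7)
The plan is to reduce the monotonicity of $J_{n,\sigma}$ to a handful of monotonicity facts about its ingredients, after first disposing of the trivial cases. When $\epsilon' \leq b(\sigma)$ both sides are zero; when $\epsilon \leq b(\sigma) < \epsilon'$ the left side is zero while the right side is a sum of nonnegative terms, so the inequality is immediate. The substantive case is $b(\sigma) < \epsilon < \epsilon'$, where both sides evaluate on the post-birth branch of $J_n$, and I would argue term-by-term that the expression at $\epsilon'$ dominates that at $\epsilon$.

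The term-by-term argument rests on three ingredients. First, for any class $\tau$ the persistence $P_\epsilon(\tau)$ is nondecreasing in $\epsilon$, growing linearly from $b(\tau)$ to $d(\tau)$ and stabilizing thereafter, as already observed following the definition of $J_1$. Second, the merge clusters nest: $M_1[\sigma,\epsilon] \subseteq M_1[\sigma,\epsilon']$, because the preimage defining $M_1$ can only grow as the target threshold increases; a mechanical induction using the recursion $M_r[\sigma,\epsilon] = \bigcup_{\tau \in M_{r-1}[\sigma,\epsilon]} M_1[\tau,\epsilon]$ then propagates nestedness to every order $r$, covering the sum defining $J_3$. Third, the scaling weight $f_\sigma(\varsigma)$ is $\epsilon$-independent and nonnegative on its domain for each of the options proposed in the text (constant $1$, $d(\varsigma)/d(\sigma)$, and $1 - d(\varsigma)/d(\sigma)$), since under the elder rule any $\varsigma$ in a merge cluster of $\sigma$ satisfies $d(\varsigma) \leq d(\sigma)$. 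Combining these three facts, every summand of $J_{n,\sigma}$ is nondecreasing in $\epsilon$ while the summation index set can only grow, giving the desired inequality.

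The main obstacle will be justifying the nestedness of merge clusters with proper care. The subtle point is that $M_1[\sigma,\epsilon]$ is defined as the preimage of $[\sigma]$ under a specific induced map $H_k(\mathscr{C}_{w_i}) \to H_k(\mathscr{C}_{w_j})$, so one must argue that a class identified with $[\sigma]$ by $\epsilon$ remains so identified at $\epsilon'$. This is handled by the functorial factorization $H_k(\mathscr{C}_{w_i}) \to H_k(\mathscr{C}_\epsilon) \to H_k(\mathscr{C}_{\epsilon'})$ of the induced homomorphisms, which embeds the preimage at $\epsilon$ into the preimage at $\epsilon'$. Once this base case is settled, the higher-order version follows by a routine induction on $r$, and the remainder of the proof is a straightforward termwise comparison.
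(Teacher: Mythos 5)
Your proposal is correct and follows essentially the same route as the paper's proof, which likewise rests on the three facts that $M_r[\sigma,\epsilon]\subseteq M_r[\sigma,\epsilon']$, that $P_\epsilon$ is monotonic, and that $f_\sigma$ is constant in $\epsilon$. Your additional care in justifying the nestedness of merge clusters via the functorial factorization, and in checking nonnegativity of the weights, simply fills in details the paper leaves implicit.
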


\begin{proof}
It follows from definition that $M_k[\sigma, \epsilon] \subseteq M_k[\sigma, \epsilon^{\prime}]$ for $\epsilon < \epsilon^{\prime}$ and $k \geq 1$. Since $P_{\epsilon}$ is monotonic and $f$ is constant with respect to $\epsilon$, the conclusion follows.
\end{proof}

To analyze stability, we introduce the concept of a $p$-centrality norm. This norm assigns a numerical value to each hole's centrality, capturing the overall importance of that hole. The specific value of $p$ influences how the norm prioritizes different aspects of the centrality function. For example, $p=1$ emphasizes the average centrality, while $p= \infty$ focuses on the maximum centrality value achieved by the hole. 

\begin{definition}
    Let $\mathcal{J}_{n} = \{J_{n,\sigma}|[\sigma]\in \Lambda\}$ denote the collection of centrality functions generated by the set of persistent homology classes $\Lambda$. The \emph{p-centrality norm} given by  
$$
\left\|J_{n,\sigma}(d^*)\right\|_p = \begin{cases} \left(\int_0^{d^{*}}(J_{n,\sigma}(x))^p \ dx\right)^{1/p} & \mbox{ if } 1 \leq p < \infty\\
J_{n,\sigma}(d^* & \mbox{ if } p = \infty
\end{cases}
$$
where $d^*$ is the minimum between $d(\sigma)$, and the largest geodesic distance between any two vertices in the largest cycle in $[\sigma]$.
\end{definition}

Next, we define a metric called the $p$-centrality distance to compare collections of centrality norms. Since a natural order ceases to exist for the centrality measures, we appeal to a bottleneck-like distance approach. The computation considers optimal pairings between centrality norms in the two collections and focuses on the maximum difference between any two paired values. For computational efficiency, we match the condition in the bottleneck distance implementation in \cite{lumawig}. In this case, let $\Omega = 0\times \{\|J_{n,\sigma}(d^*)\|_p^p : J_{n,\sigma} \in \mathcal{J}_{n}\}$ and $\Omega^{\prime} = 0\times \{\|J_{n,\sigma^{\prime}}({d^{*}}^{\prime})\|_p^p : J_{n,\sigma^{\prime}} \in \mathcal{J}^{\prime}_{n}\}$. For $x_{\sigma}\in \Omega$, we define $\delta_{x_{\sigma}} = \left\|J_{n,\sigma}\right\|_p^p$ . For a bijection $\phi:\Omega\cup \Delta \to \Omega^{\prime}\cup \Delta$, define 
\begin{equation}\label{eqn:bottle}
\|x_{\sigma} - \phi(x_{\sigma})\|_{\infty} = \begin{cases}\frac{1}{2}\max\{\delta_{x_{\sigma}}, \delta_{\phi(x_{\sigma})}\} & \text{if } \phi(x_{\sigma}) \in \Delta \\ |\delta_x - \delta_{\phi(x)}| & \text{otherwise} \end{cases}.
\end{equation}

\begin{definition}
For $1 \leq p < \infty$, the \textit{$p$-centrality distance} is given by 
$$
C_p\left(\mathcal{J}_{n}, \mathcal{J}_{n}^{\prime}\right) = \inf_{\phi}\sup_{x \in X}\|x - \phi(x)\|_{\infty}
$$
where the infimum is taken over all bijections from $\Omega\cup \Delta$ to $\Omega^{\prime}\cup \Delta$.    
\end{definition}

For the case where $p = \infty$, we propose a distance akin to $p$-landscape distance \cite[p.~94]{landscape}. Note that we can order the centrality function $J_{n,\sigma}(d*)$ of each homology class $[\sigma]$ based on the maximum centrality values. Thus, we obtain a decreasing sequence $\{\|J_{n,m}\|\}_m$ where $m$ is a positive integer. The \textit{$p$-centrality distance} is then given by
$$
C_p\left(\mathcal{J}_{n, k}, \mathcal{J}_{n, k}^{\prime}\right) = \sum_m\left\|J_{n,m} - J_{n,m}^{\prime}\right\|_{p}.
$$

\begin{example}\label{example_stab}
We examine how our proposed centrality measures behave with respect to perturbations of the point cloud in Figure \ref{fig:graph} introduced by replacing each point $(x,y)$ with $(x + \kappa_1, y + \kappa_2)$ for some $\kappa_1, \kappa_2 \in [-\kappa, \kappa]$.
We then compute the $1$-centrality distance between the centrality measures of the original point cloud and its perturbation, and replicate thirty simulations of this process to generate a distribution of $1$-centrality distances represented by boxplots. In Figures \ref{fig:boxplot}, we observe how the distribution of $1$-centrality distances varies across increasing levels of perturbations.  

\begin{figure}[ht]
\centering
\includegraphics[width = 0.98\textwidth]{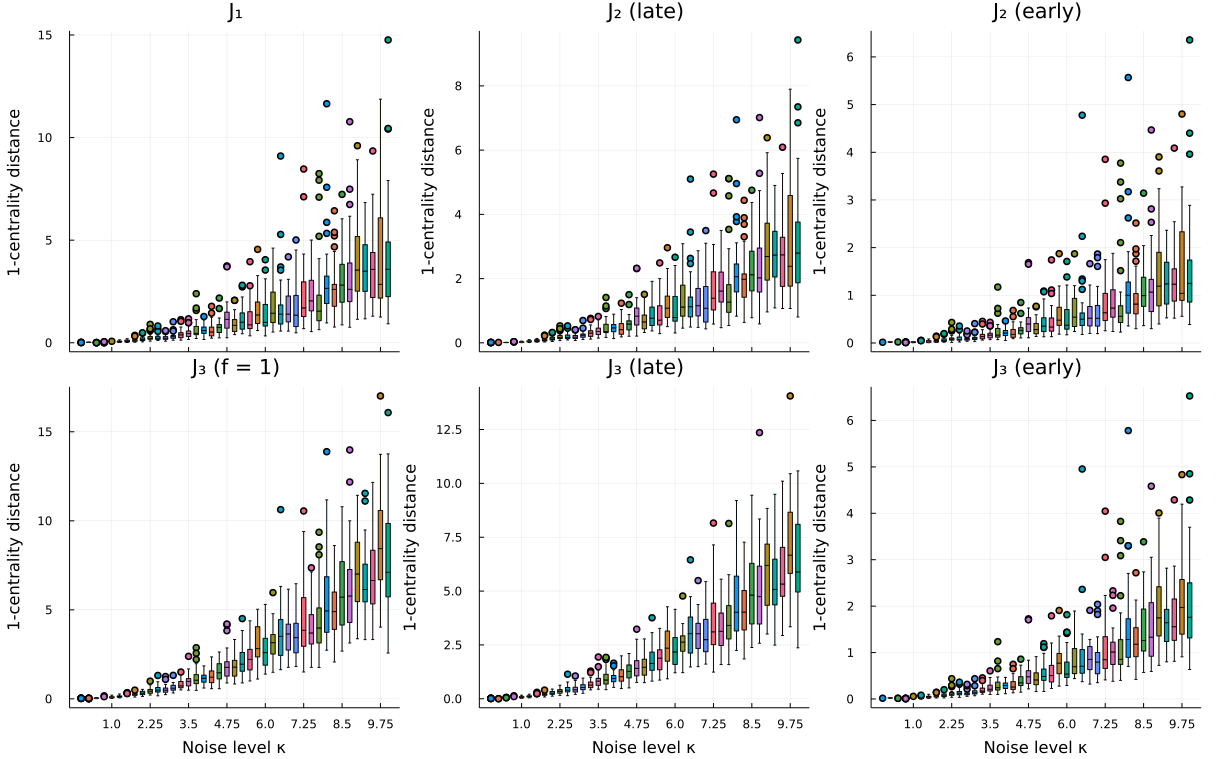}
\caption{Boxplots for the $1$-centrality distance between the centrality functions of the point cloud in Figure \ref{fig:graph} and its perturbations for all noise levels.}
\label{fig:boxplot}
\end{figure}
\end{example}

We now establish bounds for the $p$-centrality distance to quantify the stability of the centrality measures. We simplify notations by dropping $n$ and write $J_{\sigma}$ when we consider any of the centrality functions $J_{1,\sigma}$, $J_{2,\sigma}$, and $J_{3,\sigma}$. 

The following results focus on collections $\Lambda$ and $\Lambda^{\prime}$ of persistent homology class representatives obtained through a filtration of a simplicial complex. We define $K$ as the maximum persistence observed across all cycles in both $\Lambda$ and $\Lambda^{\prime}$. Additionally, we define $q$ as the maximum number of cycles between $\Lambda$ and $\Lambda^{\prime}$. This expression represents the larger collection size for comparison purposes. Lastly, we assume that there exists a homology class with a non-empty first-order merge cluster.

\begin{lemma}\label{prop:finite}
Let $\mathcal{J}(\Lambda)$ be the collection of centrality measures generated from $\Lambda$. Then
$$
C_p\left(\mathcal{J}(\Lambda), \mathcal{J}\left(\Lambda^{\prime}\right)\right) \leq
\begin{cases}
K^{1 + \frac{1}{p}}\left(1 + q\right) &\mbox{ if } 1 \leq p < \infty\\
Kq(1 + q)&\mbox{ if }p = \infty
\end{cases}.
$$
\end{lemma}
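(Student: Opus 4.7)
The plan is to establish a single uniform pointwise bound on every centrality function and then exhibit explicit pairings to bound $C_p$ from above in each regime. The key observation is that for any $n\in\{1,2,3\}$, any class $[\sigma]$, and any threshold $\epsilon$, the value $J_{n,\sigma}(\epsilon)$ is a sum involving at most $q$ terms of the form $f_\sigma(\varsigma)P_\epsilon(\varsigma)$. Each persistence satisfies $P_\epsilon(\varsigma)\leq K$ by definition of $K$; each weight satisfies $f_\sigma(\varsigma)\leq 1$ since both proposed choices lie in $[0,1]$ whenever $d(\varsigma)\leq d(\sigma)$, which holds by the elder rule whenever $[\varsigma]$ merges into $[\sigma]$; and in the case of $J_3$, the nested merge clusters $M_r[\sigma,\epsilon]$ form disjoint layers of the merge forest induced by the elder rule, so $\left|\{[\sigma]\}\cup\bigcup_r M_r[\sigma,\epsilon]\right|\leq q$. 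Combining these observations yields the uniform pointwise bound $J_{n,\sigma}(\epsilon)\leq K(1+q)$.

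For $1\leq p<\infty$, the next step is to lift this pointwise bound to the $p$-centrality norm. Since $J_{n,\sigma}$ vanishes on $[0,b(\sigma)]$ and $d^*\leq d(\sigma)$, the effective support lies in an interval of length at most $d(\sigma)-b(\sigma)\leq K$, so
\[
\|J_{n,\sigma}\|_p \leq K^{1/p}\cdot K(1+q) = K^{1+1/p}(1+q).
\]
I would then consider the bijection $\phi$ that pairs every non-diagonal point of $\Omega$ (resp. $\Omega'$) with a diagonal point. For such a pair, $\|x-\phi(x)\|_\infty=\frac{1}{2}\delta_x$ by the definition in \eqref{eqn:bottle}, hence
\[
\sup_{x}\|x-\phi(x)\|_\infty \leq \frac{1}{2}\max_\sigma \delta_{x_\sigma} \leq K^{1+1/p}(1+q).
\]
Taking the infimum over all bijections preserves this upper bound, proving the first inequality.

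For $p=\infty$, the pairing is prescribed by the definition itself: the decreasing sequences $\{\|J_{n,m}\|_\infty\}_m$ and $\{\|J'_{n,m}\|_\infty\}_m$ are matched term by term after padding the shorter sequence with zero functions. Using that nonnegativity forces $|a-b|\leq\max(a,b)$ pointwise, I obtain $\|J_{n,m}-J'_{n,m}\|_\infty\leq K(1+q)$ for every $m$. Summing over the at most $q$ nonzero terms yields $C_\infty\leq Kq(1+q)$.

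The main delicate point will be verifying the uniform pointwise bound $J_{n,\sigma}(\epsilon)\leq K(1+q)$ for $J_3$, since the nested unions could \emph{a priori} revisit a single class multiple times; the argument rests on the merge forest supplied by the elder rule, which guarantees that the layers $M_r[\sigma,\epsilon]$ are pairwise disjoint so that no class is double-counted. Once this structural fact is established, everything else reduces to routine manipulations with the bound on persistence and the bounded length of the nontrivial support interval.
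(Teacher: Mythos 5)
Your proposal is correct and follows essentially the same route as the paper: a uniform pointwise bound $J_{n,\sigma}(\epsilon)\leq K(1+q)$, lifted to the $p$-centrality norm via the bounded support interval, and then a bound on the matching distance using the diagonal/off-diagonal cases of the pairing cost. Your justification that the effective support has length at most $d(\sigma)-b(\sigma)\leq K$ (rather than just invoking $\sqrt[p]{d^{*}}$) and your explicit remark on the disjointness of the merge-cluster layers for $J_3$ are slightly more careful than the paper's write-up, but the argument is the same.
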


\begin{proof}
Monotonicity of the centrality measures (Proposition \ref{prop:mono}) yields
$$
J_{\sigma}(\epsilon) \leq P_{\epsilon}(\sigma) + \sum_r\sum_{[\nu] \in M_r[\sigma, d(\sigma)]} f_{\sigma}(\varsigma)P_{\epsilon}(\nu).
$$
By the definitions of $K$ and $q$, we have $J_{\sigma}(\epsilon) \leq K + f_{\sigma}(\varsigma)qK$. Thus, by the monotonicity of Lebesgue integrals, we obtain $\|J_{\sigma}(\epsilon)\|_p \leq \sqrt[p]{d^*}\left[K + f_{\sigma}(\varsigma)qK\right]$ where the integral is taken from $[0, d^*]$. Recall that $f_{\sigma}(\varsigma) \in (0,1]$. Thus,
\begin{equation}\label{eqn:1}
    \|J_{\sigma}(\epsilon)\|_p \leq K^{1 + \frac{1}{p}}(1+q).
\end{equation}

Now, we define $X = \{\|J_{\sigma}\|: J_{\sigma} \in \mathcal{J}(\Lambda)\}$ and $Y = \{\|J_{\delta}^{'}\| : J_{\delta}^{'} \in \mathcal{J}(\Lambda^{\prime})\}$. Assuming $|\Lambda| \leq |\Lambda^{\prime}|$, we consider a bijection $\phi: X \cup \Delta \rightarrow Y$. Note that $\left|\|J_{\sigma}\|_p-\|J_{\delta}\|_p\right| \leq \max\{\|J_{\sigma}\|_p, \|J_{\delta}\|_p\}$. Using Equation \ref{eqn:1}, for any $x \in X$ and $y \in Y$,
$$
\left|x - y\right| \leq K^{1 + \frac{1}{p}}\left(1 + q\right).
$$
Now, suppose $x \in \Delta$. By Equation \ref{eqn:bottle}, we obtain
$$
\dfrac{1}{2}\max\{x, \phi(x)\} \leq \dfrac{1}{2}K^{1 + \frac{1}{p}}\left(1 + q\right) < K^{1 + \frac{1}{p}}\left(1 + q\right).
$$
Therefore, $C_p\left(\mathcal{J}(\Lambda), \mathcal{J}\left(\Lambda^{\prime}\right)\right) \leq K^{1 + \frac{1}{p}}\left(1 + q\right)$ for $1 \leq p < \infty$. 

We consider the case where $p = \infty$. Note that, for any $\epsilon \geq 0$, $J_{\sigma}(\epsilon) \geq 0$. Hence,
$$
\sum_m\left\|J_m - J_m^{\prime}\right\|_{\infty} \leq \sum_m\max\{J_m(d*), J_m(d*)\}.
$$
Since $J_{\sigma}(\epsilon) \leq K + f_{\sigma}(\varsigma)qK$ and $f_{\sigma}(\varsigma) \in (0,1]$, we have 
$$
\sum_m\left\|J_m - J_m^{\prime}\right\|_{\infty} \leq \sum_mK(1 + q).
$$ 
By applying the definition of $q$, we have proven the lemma.
\end{proof}

The value of $K$ suggests a potential for variations in the lifetime of topological features. In the previous lemma, a smaller value of $K$ contributes to tighter bounds.

We now introduce a constant that will be instrumental in establishing an upper bound for the centrality distance. This constant is given by
$$
R(p) := 
\begin{cases}
\sqrt[p]{2}K(1 + q) & \text{if } 1 \leq p < \infty \\
2q(1 + q) & \text{if } p =\infty
\end{cases}. 
$$
The next step is to demonstrate that the $p$-centrality distance between any two collections of centrality measures is always upper bounded by this constant and the bottleneck distance between the corresponding collections.

\begin{theorem}\label{thm:bot}
Let $D$ and $D^{\prime}$ represent the persistence diagrams corresponding to the collections $\Lambda$ and $\Lambda^{\prime}$. Then
$$
C_p\left(\mathcal{J}(\Lambda), \mathcal{J}\left(\Lambda^{\prime}\right)\right) \leq
\begin{cases}
R(p)\sqrt[p]{d_B(D, D^{\prime})} &\mbox{ if } 1 \leq p < \infty\\
R(p)d_B(D, D^{\prime})&\mbox{ if }p = \infty
\end{cases}.
$$
\end{theorem}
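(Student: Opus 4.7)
My plan is to lift the optimal bottleneck matching $\phi^{*} : D \cup \Delta \to D' \cup \Delta$ between the persistence diagrams to a bijection $\tilde{\phi} : \Omega \cup \Delta \to \Omega' \cup \Delta$ between the corresponding collections of centrality norms (pairing each real class with the real class it is matched to, and sending the diagonal to the diagonal), and then estimate the supremum term by term. The critical input is a refinement of Equation (\ref{eqn:1}) from Lemma~\ref{prop:finite}: since $J_{n,\sigma}$ vanishes on $[0, b(\sigma)]$, is bounded above by $K(1+q)$ on $[b(\sigma), d^{*}]$, and $d^{*} \leq d(\sigma)$, a sharper estimate is
$$
\|J_{n,\sigma}\|_{p} \leq \bigl(d(\sigma) - b(\sigma)\bigr)^{1/p}\, K(1+q).
$$
This trades the global factor $\sqrt[p]{K}$ in Lemma~\ref{prop:finite} for one controlled by the persistence of $[\sigma]$, which is exactly what the bottleneck matching controls.

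For the case $1 \leq p < \infty$, I would split into two subcases according to $\tilde{\phi}$. If $[\sigma]$ is matched to the diagonal, then $d(\sigma)-b(\sigma) \leq 2\,d_{B}(D,D')$, so the sharper estimate gives $\|J_{\sigma}\|_{p} \leq (2 d_{B})^{1/p} K(1+q) = R(p)\sqrt[p]{d_{B}}$, and the factor $\tfrac{1}{2}$ appearing in the diagonal case of Equation (\ref{eqn:bottle}) preserves this bound. If $[\sigma]$ is matched to a real $[\sigma']$ with $\max\{|b(\sigma)-b(\sigma')|, |d(\sigma)-d(\sigma')|\} \leq d_{B}(D,D')$, I would compare the two functions on $[0, \max\{d^{*}_{\sigma}, d^{*}_{\sigma'}\}]$, splitting into the common active window and the two side intervals of length at most $d_{B}$ where only one function is supported; on the latter the contribution is controlled by $K(1+q)$ against a window of length $d_{B}$, and on the former by a reverse triangle inequality together with the sharper persistence-based estimate. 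Taking the supremum over matched pairs and invoking the definition of the infimum over bijections yields $C_{p}(\mathcal{J}(\Lambda), \mathcal{J}(\Lambda')) \leq R(p)\sqrt[p]{d_{B}(D,D')}$.

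For $p = \infty$ I would instead work with the decreasing rearrangement of centrality values $\{\|J_{n,m}\|_{\infty}\}_{m}$, bound each term $\|J_{n,m} - J'_{n,m}\|_{\infty}$ by $\max\{\|J_{n,m}\|_{\infty},\|J'_{n,m}\|_{\infty}\} \leq K(1+q)$ (using monotonicity and the uniform bound from Lemma~\ref{prop:finite}), and sum over the at most $q$ non-trivial indices, with the factor $2$ absorbed into $R(\infty) = 2q(1+q)$; a final scaling by $d_{B}(D,D')$ arises from matching the bottleneck error class-by-class. The hard part throughout will be the real-matched case for $1 \leq p < \infty$: the merge clusters $M_{r}[\sigma,\epsilon]$ and $M_{r}[\sigma',\epsilon]$ could in principle differ substantially after perturbing the filtration, so controlling $|J_{\sigma}(\epsilon) - J_{\sigma'}(\epsilon)|$ on the overlap needs either an interleaving-style argument tying bottleneck-close classes to compatible merge histories, or a slightly looser comparison that is still absorbed by $R(p)\sqrt[p]{d_{B}}$. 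This is the step where I expect most of the technical effort to concentrate.
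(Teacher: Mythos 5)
Your strategy of lifting the optimal bottleneck matching and comparing matched centrality functions term by term is genuinely different from what the paper does, and it contains a gap at exactly the point you flag: the real-matched case for $1 \leq p < \infty$. There you need to control $|J_{\sigma}(\epsilon) - J_{\sigma'}(\epsilon)|$ on the common window, and this quantity is \emph{not} controlled by $d_B(D,D')$. The functions $J_n$ depend on the merge clusters $M_r[\sigma,\epsilon]$, which are combinatorial data: an arbitrarily small perturbation of the filtration (hence arbitrarily small $d_B$) can reroute which classes merge into which, transferring an $O(K q)$ amount of accumulated persistence from one centrality curve to another. So the ``interleaving-style argument tying bottleneck-close classes to compatible merge histories'' that you hope for does not exist, and no reverse-triangle-inequality estimate on the overlap will close the case. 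Your ``slightly looser comparison'' escape hatch is the right instinct, but as written the proposal does not supply it, so the proof is incomplete precisely where you predicted the effort would concentrate.

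For comparison, the paper never compares matched functions at all. Its proof is two lines: the triangle inequality gives $|P_{\epsilon}(\sigma) - P_{\epsilon}(\delta)| \leq |d(\sigma)-d(\delta)| + |b(\sigma)-b(\delta)| \leq 2 d_B(D,D')$ under the optimal matching, from which it concludes $K \leq 2 d_B(D,D')$; substituting this into the bound $K^{1+1/p}(1+q)$ (respectively $Kq(1+q)$) of Lemma~\ref{prop:finite} --- whose own proof only ever uses the crude estimate $|x - y| \leq \max\{x,y\}$ --- immediately yields $R(p)\sqrt[p]{d_B}$ (respectively $R(\infty) d_B$). In other words, the entire dependence on $d_B$ enters through the single global constant $K$, and the difficult pointwise comparison you set up is bypassed entirely. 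Your sharper estimate $\|J_{n,\sigma}\|_p \leq (d(\sigma)-b(\sigma))^{1/p} K(1+q)$ is a good observation and correctly handles the diagonal-matched case (indeed it is the honest version of the paper's claim, which as stated really only justifies $P_{\epsilon}(\sigma) \leq 2d_B$ for classes matched to the diagonal); but to complete a proof along the paper's lines you should abandon the matched-pair comparison and instead bound every off-diagonal difference by the maximum of the two norms, reducing everything to the bound of Lemma~\ref{prop:finite} with $K$ replaced by $2d_B(D,D')$ in the $\sqrt[p]{K}$ factor.
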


\begin{proof}
For any pair of homology classes $[\sigma]$ and $[\delta]$, application of the triangle inequality yields
$$
|P_{\epsilon}(\sigma) - P_{\epsilon}(\delta)| \leq |d(\sigma) - d(\delta)| + |b(\sigma) - b(\delta)|.
$$
By the definition of the bottleneck distance, we have $|P_{\epsilon}(\sigma) - P_{\epsilon}(\delta)| \leq 2d_B(D, D^{\prime})$. For the left-hand expression, the maximum taken over all homology classes $[\sigma]$ and $[\delta]$ is $P_{\epsilon}(\sigma)$ or $P_{\epsilon}(\delta)$. Consequently, $K \leq 2d_B(D, D^{\prime})$. The conclusion follows from Lemma \ref{prop:finite}.
\end{proof}

The following corollaries leverage the combinatorial stability theorem \cite[p.~123]{stability} to reformulate the previously established bounds in terms of the constant $R(p)$ and properties of the filtration functions. We omit the proofs for these corollaries as they directly apply the referenced theorem.

\begin{corollary}\label{cor:stab}
Let $w, w^{\prime} :\mathscr{C} \rightarrow \mathbb{R}$ be monotone real-valued functions that filter the simplicial complex $\mathscr{C}$. Then
$$
C_p\left(\mathcal{J}(\Lambda), \mathcal{J}\left(\Lambda^{\prime}\right)\right) \leq
\begin{cases}
R(p)\sqrt[p]{\|w - w^{\prime}\|_{\infty}} &\mbox{ if } 1 \leq p < \infty\\
R(p)\|w - w^{\prime}\|_{\infty}&\mbox{ if }p = \infty
\end{cases}.
$$
\end{corollary}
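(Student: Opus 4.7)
The plan is to chain Theorem \ref{thm:bot} with the combinatorial stability theorem of Chazal et al.\ cited in the paper. Theorem \ref{thm:bot} already reduces the $p$-centrality distance to the bottleneck distance between the two persistence diagrams via
$$
C_p(\mathcal{J}(\Lambda), \mathcal{J}(\Lambda')) \leq
\begin{cases} R(p)\sqrt[p]{d_B(D, D')} & \text{if } 1 \leq p < \infty, \\ R(p)\, d_B(D, D') & \text{if } p = \infty, \end{cases}
$$
so the only remaining task is to control $d_B(D, D')$ by a quantity depending on the filtration functions $w$ and $w'$.

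This is exactly where the combinatorial stability theorem enters. Since $w$ and $w'$ are both monotone functions on the same underlying complex $\mathscr{C}$, the hypothesis of that theorem is satisfied, and we obtain
$$
d_B(D, D') \leq \|w - w'\|_\infty.
$$
Substituting this bound into the inequality from Theorem \ref{thm:bot} yields the two stated cases, using the fact that the $p$-th root is monotone nondecreasing on $\mathbb{R}_{\geq 0}$ for the case $1 \leq p < \infty$.

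The only point requiring genuine verification, and therefore the most likely obstacle, is checking that the hypotheses of the cited combinatorial stability theorem are met in our setting: namely, that the filtrations induced by $w$ and $w'$ give rise to persistence modules whose diagrams admit a well-defined bottleneck distance, and that monotonicity of $w, w'$ with respect to the face relation holds. Both are assumptions explicit in the statement of the corollary, so no further verification is needed. Because the argument is a direct two-step composition of results already available, the paper is justified in omitting the proof.
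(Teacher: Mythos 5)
Your proposal is correct and is exactly the argument the paper intends: the authors explicitly omit the proof because it is the direct composition of Theorem \ref{thm:bot} with the combinatorial stability theorem $d_B(D,D') \leq \|w - w'\|_{\infty}$ for monotone filtering functions on the same complex, followed by monotonicity of the $p$-th root. Nothing further is needed.
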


In the next corollary, we introduce another constant given by
$$
R^{\prime}(p) := 
\begin{cases}
\sqrt[p]{2}K(1 + q^{\prime}) & \text{if } 1 \leq p < \infty \\ 
2q^{\prime}\left(1+q^{\prime}\right) & \text{if } p = \infty
\end{cases}
$$
where $q^{\prime} := \max\left\{\sum_r|M_r[\sigma, d(\sigma)]|: \sigma \in \Lambda \cup \Lambda^{\prime} \text{ and } P_{\epsilon}(\sigma) \neq 0\right\}$.

The term $q'$ represents the maximum number of successive merging of homology classes across $\Lambda$ and $\Lambda'$.

\begin{corollary}\label{thm:sharp}
Let $D$ and $D^{\prime}$ be persistence diagrams corresponding to the collections $\Lambda$ and $\Lambda^{\prime}$ and let 
 Then 
$$
C_p\left(\mathcal{J}_n(\Lambda), \mathcal{J}_n\left(\Lambda^{\prime}\right)\right) \leq \begin{cases}R^{\prime}(p)\sqrt[p]{d_B(D, D^{\prime})} & \text{if } 1 \leq p < \infty \\ R^{\prime}(p)d_B(D, D^{\prime}) & \text{if } p = 
\infty\end{cases}.
$$ 
\end{corollary}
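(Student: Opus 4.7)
The plan is to mirror the Lemma \ref{prop:finite} $\to$ Theorem \ref{thm:bot} chain with the refined counting constant $q'$ replacing $q$ throughout. The key observation is that $q'$ precisely bounds, for any single cycle $\sigma$, the total number of summands appearing in the definition of $J_{3,\sigma}(\epsilon)$ across all orders of merging. Thus the pointwise ceiling used in the proof of Lemma \ref{prop:finite} sharpens from $K(1+q)$ to $K(1+q')$, and propagating this sharpening through the bottleneck argument of Theorem \ref{thm:bot} should yield exactly the stated bound.

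Concretely, I would first re-run the Lemma \ref{prop:finite} computation: by monotonicity,
$$
J_\sigma(\epsilon) \leq P_\epsilon(\sigma) + \sum_r \sum_{[\varsigma] \in M_r[\sigma, d(\sigma)]} f_\sigma(\varsigma) P_\epsilon(\varsigma),
$$
and bounding the double sum by its cardinality (at most $q'$), each $P_\epsilon \leq K$, and $f_\sigma \in (0,1]$ gives $J_\sigma(\epsilon) \leq K(1+q')$. Integrating over $[0, d^*]$ with $d^* \leq K$ yields $\|J_\sigma\|_p \leq K^{1+1/p}(1+q')$ for $1 \leq p < \infty$. Repeating the bijection-with-diagonal argument produces $C_p(\mathcal{J}(\Lambda), \mathcal{J}(\Lambda')) \leq K^{1+1/p}(1+q')$ for finite $p$, and the ordered-sequence argument yields $C_\infty \leq K q'(1+q')$. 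Finally, invoking the inequality $K \leq 2 d_B(D, D')$ established inside the proof of Theorem \ref{thm:bot}, I would factor $K^{1+1/p}(1+q') = K(1+q') \cdot K^{1/p}$ and substitute $K^{1/p} \leq \sqrt[p]{2 d_B(D, D')}$ to obtain the $R'(p)\sqrt[p]{d_B(D, D')}$ bound. The $p = \infty$ case substitutes $K \leq 2 d_B(D, D')$ directly into $K q'(1+q')$.

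The main subtlety, rather than a serious obstacle, is the $p = \infty$ case, where the centrality distance sums over an ordered sequence indexed by \emph{all} classes rather than by the per-cycle merge cascade measured by $q'$. I expect to handle this by arguing that contributions from pairings against the diagonal only matter when at least one paired centrality is nontrivial, and that the count of such effective contributors is dominated by the same merge-capacity constant $q'$ controlling the pointwise ceiling. Once that bookkeeping is in place, the corollary follows by the same template as Theorem \ref{thm:bot}, with $R(p)$ replaced by $R'(p)$.
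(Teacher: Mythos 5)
Your finite-$p$ argument is exactly the route the paper intends (it omits the proof, saying the corollaries ``directly apply the referenced theorem''): since $\sum_r\lvert M_r[\sigma,d(\sigma)]\rvert\leq q^{\prime}$ by definition, the pointwise ceiling $K(1+q)$ from Lemma \ref{prop:finite} sharpens to $K(1+q^{\prime})$, and substituting $K\leq 2d_B(D,D^{\prime})$ into the factor $K^{1/p}$ as in Theorem \ref{thm:bot} gives $R^{\prime}(p)\sqrt[p]{d_B(D,D^{\prime})}$. That part is correct and essentially the paper's argument with $q^{\prime}$ in place of $q$.

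The $p=\infty$ case is where you have correctly located the difficulty, but your proposed resolution does not work. You suggest that the number of ``effective contributors'' to $\sum_m\lVert J_m-J_m^{\prime}\rVert_\infty$ is dominated by $q^{\prime}$. It is not: $q^{\prime}$ bounds the size of a \emph{single} class's cumulative merge cascade, not the number of classes with nonzero centrality. A filtration with many long-lived classes, none of which merge with more than one other class, has $q$ large and every $\lVert J_m\rVert_\infty$ as large as $K$, while $q^{\prime}$ stays at $1$; then $\sum_m\lVert J_m-J_m^{\prime}\rVert_\infty$ can be of order $qK$, which exceeds $R^{\prime}(\infty)\,d_B=2q^{\prime}(1+q^{\prime})d_B\leq q^{\prime}(1+q^{\prime})\cdot 2d_B$ whenever $q\gg q^{\prime}$. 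The honest conclusion of your template for $p=\infty$ is $C_\infty\leq qK(1+q^{\prime})\leq 2q(1+q^{\prime})d_B(D,D^{\prime})$, i.e.\ the count of summands must remain $q$, not $q^{\prime}$; the stated constant $2q^{\prime}(1+q^{\prime})$ only follows under the additional hypothesis $q\leq q^{\prime}$. This gap is arguably present in the paper's own (omitted) justification as well, but your sketch as written would not close it, so either prove the bound with $q$ in the outer factor or flag the extra hypothesis explicitly.
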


The previous corollary refines the bounds further by incorporating information about the merge clusters of homology classes. A smaller $q'$ indicates less variation in how homology classes merge between $\Lambda$ and $\Lambda'$, leading to potentially tighter bounds.

\begin{theorem}[Stability]\label{prop:stab1}
Let $w, w^{\prime} :\mathscr{C} \rightarrow \mathbb{R}$ be monotone real-valued functions that filter the simplicial complex $\mathscr{C}$. The inequality
$$
C_p\left(\mathcal{J}_n(\Lambda), \mathcal{J}_n\left(\Lambda^{\prime}\right)\right) \leq R^{\prime}(p)\|w - w^{\prime}\|_{\infty}.
$$ 
holds when either $p = \infty$, or $1 \leq p < \infty$ and $q^{\prime} > \left(1/\sqrt[p]{2}K\right) - 1$. 
\end{theorem}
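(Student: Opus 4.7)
The overall strategy is to chain Corollary \ref{thm:sharp} with the combinatorial stability theorem for persistence diagrams from \cite{stability}. The latter yields $d_B(D, D') \leq \|w - w'\|_{\infty}$, where $D$ and $D'$ are the persistence diagrams of the filtrations induced by $w$ and $w'$ on the common complex $\mathscr{C}$. Substituting this into the conclusion of Corollary \ref{thm:sharp} is what produces the advertised bound, with the two alternatives in the theorem corresponding to the two cases of Corollary \ref{thm:sharp}.

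For $p = \infty$, Corollary \ref{thm:sharp} already states $C_{\infty}(\mathcal{J}_n(\Lambda), \mathcal{J}_n(\Lambda')) \leq R'(\infty)\,d_B(D, D')$, and combinatorial stability immediately yields $C_{\infty} \leq R'(\infty)\|w - w'\|_\infty$ without any extra hypothesis. This handles the first alternative.

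For $1 \leq p < \infty$, the same chain only gives $C_p \leq R'(p)\sqrt[p]{\|w - w'\|_\infty}$, which is weaker than the claimed $R'(p)\|w - w'\|_\infty$ whenever $\|w - w'\|_\infty < 1$. This is where the hypothesis $q' > (1/\sqrt[p]{2}\,K) - 1$ enters: rearranging, it is equivalent to $\sqrt[p]{2}\,K(1 + q') > 1$, i.e.\ $R'(p) > 1$. The plan is to use this largeness of $R'(p)$ to absorb the $p$-th-root factor, either by directly verifying $R'(p)\sqrt[p]{\|w - w'\|_\infty} \leq R'(p)\|w - w'\|_\infty$ in the regime where $\|w - w'\|_\infty$ is not too small, or by falling back on the direct estimate from Lemma \ref{prop:finite} when $\|w - w'\|_\infty$ is small, and checking that the specific threshold chosen for $q'$ is exactly what makes the two regimes join cleanly.

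The main obstacle is precisely this bookkeeping in the finite-$p$ case: the elementary inequality $\sqrt[p]{t} \leq t$ fails for $t < 1$, so one must verify that $R'(p) > 1$ is enough to close the gap \emph{uniformly} in $\|w - w'\|_\infty$. I expect this to reduce to a short algebraic check showing that the threshold $q' > (1/\sqrt[p]{2}\,K) - 1$ is exactly the sharp condition under which the linear bound $R'(p)\|w - w'\|_\infty$ dominates the $p$-th-root bound of Corollary \ref{thm:sharp} on the full admissible range; once this is isolated, the rest is an immediate substitution of combinatorial stability.
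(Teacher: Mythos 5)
Your overall route---combinatorial stability gives $d_B(D,D') \le \|w-w'\|_\infty$, substitute into Corollary \ref{thm:sharp}---is exactly the paper's, and your treatment of the $p=\infty$ case is complete. The difficulty is the finite-$p$ case, and you have correctly isolated where it lives: the chain only yields $C_p \le R'(p)\sqrt[p]{\|w-w'\|_\infty}$, and $\sqrt[p]{t}\le t$ fails for $t<1$. Unfortunately, neither of your two proposed repairs can close this. First, the hypothesis $q' > (1/\sqrt[p]{2}K)-1$ is, as you note, equivalent to $R'(p)>1$, but the inequality you would need, $R'(p)\sqrt[p]{t}\le R'(p)\,t$, carries $R'(p)$ on both sides; it cancels, and you are left with $\sqrt[p]{t}\le t$ again no matter how large $R'(p)$ is. A multiplicative constant cannot convert a $p$-th-root bound into a linear bound near $t=0$. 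Second, the fallback to Lemma \ref{prop:finite} gives a bound of the form $K^{1+1/p}(1+q)$ that is independent of $\|w-w'\|_\infty$, so it cannot be dominated by $R'(p)\|w-w'\|_\infty$ as $\|w-w'\|_\infty\to 0$ unless the centrality distance itself vanishes there, which nothing guarantees. So the regime-splitting you sketch does not assemble into a proof, and there is no ``sharp threshold'' for $q'$ that makes the two regimes join.

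You should also know that the paper's own proof of this step asserts that $x^{1/p}\le x$ for all $x\ge 0$, which is false on $(0,1)$; the hypothesis on $q'$ is invoked but plays no role in repairing this. So the obstacle you flagged is genuine and is not resolved in the source either: from Corollary \ref{thm:sharp} plus combinatorial stability one only gets the stated Lipschitz-type conclusion when $\|w-w'\|_\infty\ge 1$ (where the condition on $q'$ is not even needed), while for $\|w-w'\|_\infty<1$ the argument delivers only the weaker H\"older-type bound $C_p \le R'(p)\sqrt[p]{\|w-w'\|_\infty}$, in the spirit of Corollary \ref{cor:stab}. Your instinct that this step requires more than a ``short algebraic check'' was correct.
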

\begin{proof}
Suppose that $1 \leq p < \infty$ and $q^{\prime} > \left(1/\sqrt[p]{2}K\right) - 1$. It follows from the combinatorial stability theorem and Corollary \ref{thm:sharp} that
$$
\dfrac{1}{R^{\prime}(p)}C_p\left(\mathcal{J}_n(\Lambda), \mathcal{J}_n\left(\Lambda^{\prime}\right)\right) \leq \|w - w^{\prime}\|_{\infty}^{\frac{1}{p}}.
$$
The conclusion follows because $x^{\frac{1}{p}} \leq x$ for any $x \geq 0$. 

Suppose $p = \infty$. The inequality $2q^{\prime}(1+q^{\prime}) \geq 1$ holds since $q^{\prime} \geq 1$. Applying the combinatorial stability theorem, we obtain
$$
\dfrac{1}{R^{\prime}(p)}C_p\left(\mathcal{J}_n(\Lambda), \mathcal{J}_n\left(\Lambda^{\prime}\right)\right) \leq \|w - w^{\prime}\|_{\infty}.
$$
\end{proof}

To evaluate the effectiveness of the bounds established in Corollary~\ref{thm:sharp}, we perform the following analysis. We consider the 1-centrality distances (refer to Example~\ref{example_stab}) calculated for various perturbation levels. From the bounds provided by Corollary~\ref{thm:sharp}, we subtract the actual 1-centrality distances. These differences are then plotted as boxplots across increasing levels of perturbations, similar to the approach used in the previous figures. This visualization allows us to assess how closely the theoretical bounds align with the empirical observations of the 1-centrality distance under increasing network perturbations.

\begin{figure}[ht]
\centering
\includegraphics[width = 1.00\textwidth]{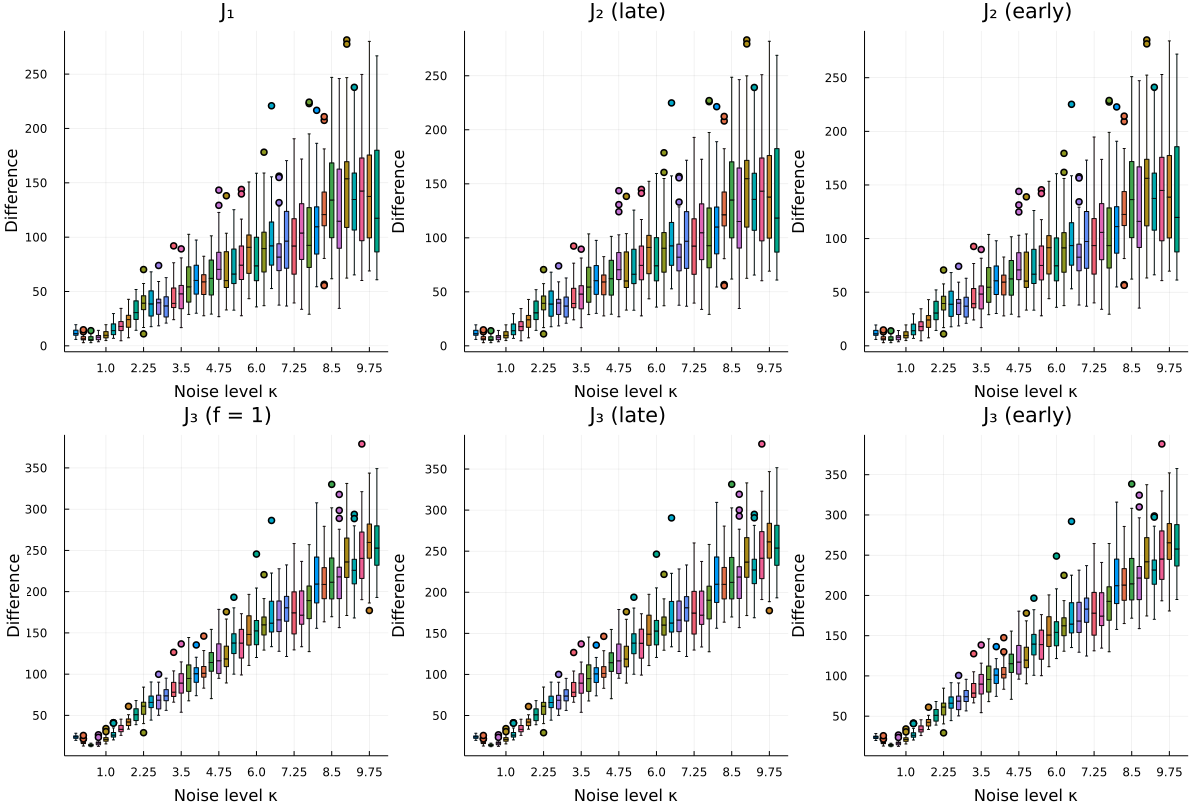}
\caption{Boxplots of the difference between the $1$-centrality distance in Figure \ref{fig:boxplot} and corresponding bounds given by Proposition \ref{thm:sharp}.}
\label{fig:stabbp}
\end{figure}

\section{Application to Fractal-Like Point Clouds}

This chapter explores the application of centrality measures to self-similar point clouds. While toy datasets offer valuable starting points, their applicability to real-world scenarios may be limited. In contrast, fractal-like point clouds, with their inherent complexity and potential for higher dimensionality, have been vastly documented to appear in nature \cite{fractal}. Persistent homology has been studied on fractals such as computing the affine fractals from landscapes \cite{fractalpers} and estimating the fractal dimension \cite{fractaldim}. Analyzing these structures allows us to explore the generalizability and effectiveness of centrality measures in deciphering intricate network-like relationships within spatial data.

For this application, we employ a method for separating signal from noise in persistence diagrams using a hypothesis testing approach developed by Bobrowski and Skraba \cite{universal}. This method relies on the concept of \textit{multiplicative persistence} $(\pi(p) = \frac{d}{b})$ for any birth-death pair $(b,d)$ in the $k$th persistence diagram $\text{dgm}_k$. Throughout this chapter, we operate under the assumption of the validity of the strong universality conjecture. Within this theoretical framework, a left-skewed Gumbel distribution (LGumbel) plays a pivotal role in the hypothesis testing process. In the conjecture below, the constant $\lambda$ is the Euler-Mascheroni constant. Moreover, $A(\mathcal{T})$ is $1$ if $\mathcal{T}$ is the Rips filtration, and $0.5$ if $\mathcal{T}$ is the \v{C}ech filtration.

\begin{conjecture}
Consider $d$-dimensional metric measure space $\mathcal{S}$ and a sequence of random variables $\mathbf{X}_n = (X_1,\dots,X_n) \in S^n$ with joint probability law $\mathbb{P}_n$. Let $\mathbb{S} = (\mathcal{S},\mathbb{P})$ be a sampling model under a filtration type $\mathcal{T}$. For any $\mathbb{S} \in \mathcal{U}, \mathcal{T}$, and $k \geq 1$, the limit of $\mathcal{L}_n$ as $n$ approaches infinity  equals the left-skewed Gumbel distribution, where
\[
\mathcal{L}_n(\mathbb{S}, \mathcal{T}, k) = \frac{1}{|\text{dgm}_k|}\sum_{p\in\text{dgm}_k}\delta_{l(p)}
\]
and $l(p) = A(\mathcal{T})\log\log(\pi(p)) - \lambda - \bar{L}$. The expression $\bar{L}$ is given by
\[
\frac{1}{\text{dgm}_k}\sum_{p \in \text{dgm}_k}\log\log(\pi(p)).
\]
\end{conjecture}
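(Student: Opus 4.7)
The final statement is the Bobrowski--Skraba universality conjecture, a significant open problem in random topology, so any proof proposal here is necessarily a research programme rather than a concrete argument.

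The plan is to first reduce weak convergence of $\mathcal{L}_n$ to the identification of a single-feature law. Since $\mathcal{L}_n$ is an empirical measure over $|\text{dgm}_k|$ points, and the cardinality $|\text{dgm}_k|$ for a random sample of size $n$ diverges like a positive power of $n$, a law-of-large-numbers for empirical measures of weakly dependent point processes (for example, via the stabilization method of Penrose and Yukich, adapted to persistence-pair functionals) should reduce the problem to identifying the limiting law of $A(\mathcal{T})\log\log \pi(p)$ for a single, typical birth-death pair, centered at its mean. The subtraction by $\bar{L}$ in the definition of $l(p)$ is precisely the empirical centering needed for this reduction to be robust to finite-sample bias.

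Next, I would establish the tail asymptotic needed for this single-feature law to be Gumbel. The heuristic is that realizing a $k$-cycle with multiplicative persistence at least $t$ requires an approximately empty annular region whose volume is controlled by a power of $\log t$ with exponent determined by $A(\mathcal{T})$, where the factor $\tfrac{1}{2}$ for the \v{C}ech case reflects that \v{C}ech balls grow at half the rate of Rips inter-point distances. Such an ``empty shell'' event has probability of double-exponential type under a Poisson sampling model, which places $\log\log\pi$ in the Gumbel domain of attraction; the shift by $\lambda$ then corresponds to the Euler--Mascheroni mean of the standard Gumbel. Universality across $\mathbb{S}\in\mathcal{U}$ would follow from a Palm-calculus localization: extremal persistent features should depend only on the local behavior of $\mathbb{P}$, which can be approximated by Lebesgue measure at small scales, with the $\log\log$ transformation absorbing the multiplicative relative error.

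The main obstacle --- and the reason the conjecture remains open --- is the rigorous derivation of this precise double-exponential tail for $\pi(p)$. Even for uniform samples on the unit cube, enumerating the combinatorial configurations of vacant shells that support a prescribed large multiplicative persistence in dimension $k\geq 1$ is not available in closed form, and extending the analysis to arbitrary $d$-dimensional metric measure spaces requires controlling possible long-range correlations between extremal cycles that could distort the Gumbel shape. Establishing the appropriate stabilization estimates, together with the tail asymptotic uniformly across the sampling class $\mathcal{U}$ and both filtration types, constitute the genuine technical hurdles of this programme.
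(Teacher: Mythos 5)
This statement is not proved in the paper: it is the strong universality conjecture of Bobrowski and Skraba, which the paper explicitly cites and \emph{assumes} as a working hypothesis for its hypothesis-testing framework. There is therefore no proof in the paper to compare your attempt against, and you were right to recognize that no complete argument can be given --- your submission is a research programme, not a proof, and it should be assessed as such.

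As a programme, your outline is broadly sensible and consistent with how the random-topology community views the problem: a reduction from the empirical measure $\mathcal{L}_n$ to a single-feature law via stabilization-type arguments, a double-exponential tail for the multiplicative persistence $\pi(p)$ placing $\log\log\pi(p)$ in a Gumbel domain of attraction, and a Palm-calculus localization to explain universality over the sampling class $\mathcal{U}$. The role you assign to the empirical centering $\bar{L}$ and to the shift by the Euler--Mascheroni constant $\lambda$ is also the right intuition. Two cautions: first, the heuristic identification of $A(\mathcal{T})$ with a ratio of growth rates between \v{C}ech balls and Rips inter-point distances is plausible but unsubstantiated, and the ``empty annular shell'' picture for a $k$-cycle with large $\pi(p)$ is far from a rigorous geometric characterization when $k\geq 1$; second, weak convergence of the empirical measure requires controlling \emph{all} features, not just the extremal ones, so a Gumbel tail analysis alone would not suffice even if completed. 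None of this constitutes a gap relative to the paper, since the paper offers no proof; but you should state explicitly in any write-up that the statement remains open and that your centrality results in this setting are conditional on it, exactly as the paper does.
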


This conjecture states that, as the sample size approaches infinity, the distribution of certain features derived from persistence diagrams converges to the left skewed Gumbel (LGumbel) distribution.

From the hypothesis testing framework, we recover the signal part $\text{dgm}_k^S$ of the diagram by considering only those $p \in \text{dgm}_k$ where $$e^{-e^{l(p)}} < \frac{\alpha}{|\text{dgm}_k|}$$ for a significance level $\alpha$. Until the end of this chapter, we consider the significance level $\alpha = 0.05$.

We now explore the topological structure of a point cloud containing 1,000 points sampled around the well-known Sierpinski Triangle (Figure~\ref{fig:sierp}). Given the inherent variability in point cloud data and the complex nature of TDA, we employ a bootstrapping approach to assess the robustness of our findings regarding the number of holes (signals) identified. 

\begin{figure}[ht]
\centering
\includegraphics[width=0.45\linewidth]{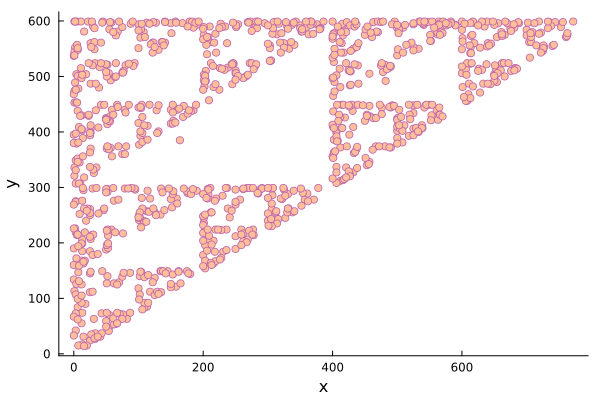}
\caption{A point cloud sampled around a Sierpinski triangle.}
\label{fig:sierp}
\end{figure}

Bootstrapping \cite{bootstrap} involves creating multiple random samples (with replacement) from the original point cloud. In this case, we generated 1,000 samples, each containing 800 points (80\% of the original data). This sampling percentage is chosen following the rule of thumb that using at least 50\% of the data is beneficial for obtaining statistically significant results in bootstrapping.

We then use the hypothesis testing framework for each bootstrapped sample to identify the number of holes. The mean number of holes across all bootstrapped samples is $0.68$, with an approximate standard error of $0.018$. The 95\% confidence interval for the number of holes is $(0.643, 0.717)$. The confidence interval suggests that the true number of holes in the original point cloud likely falls between 0 and 1. However, the interval is skewed slightly closer to 1, indicating a higher probability of there being a single hole present in the data.

We examine one bootstrapped sample whose plot is shown in Figure~\ref{fig:sierp_boot}. Applying the hypothesis testing framework, we identify only a single hole as statistically significant (signal). The corresponding signal is visualized in Figure~\ref{fig:sierp_sig}. The signal contains points from the largest triangle in the fractal. 

\begin{figure}[ht]
\centering
\begin{subfigure}{0.44\textwidth}
    \includegraphics[width=\linewidth]{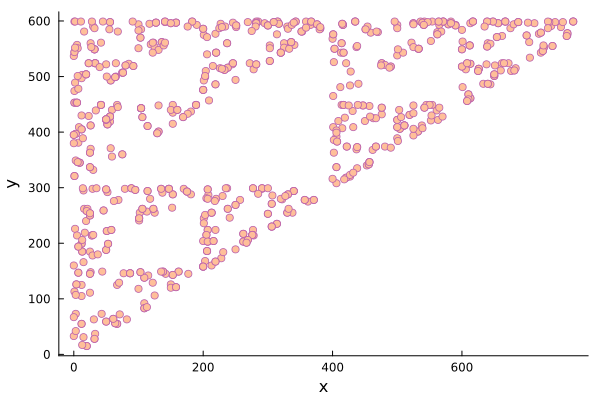}
    \caption{Point cloud}
    \label{fig:sierp_boot}
\end{subfigure}
\hfill
\begin{subfigure}{0.44\textwidth}
    \includegraphics[width=\linewidth]{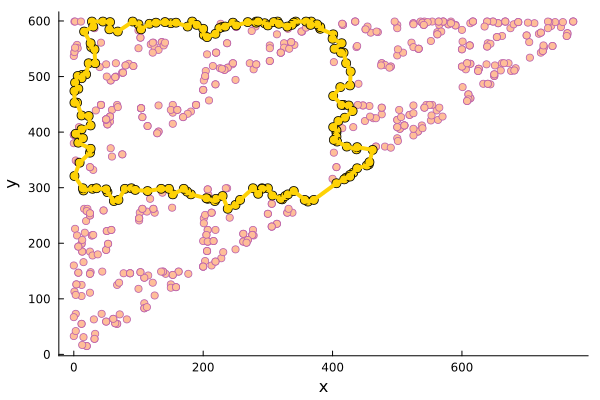}
    \caption{Signal (yellow)}
    \label{fig:sierp_sig}
\end{subfigure}
\caption{Signal from a bootstrap sample of the point cloud in Figure \ref{fig:sierp}.}
\end{figure}

In the context of centrality, we consider two primary ideas for analysis. Direct merging events between holes offer insights into the localized dynamics around the merged region. However, a more comprehensive perspective is gained by considering the cumulative effect of merging across the entire filtration process. This broader analysis, encompassing all merging events, can reveal structures with global importance in shaping the overall landscape of significant holes. Next, while centrality measures rank the holes, a quantitative method to pinpoint the most important ones might not be readily available. Here, we address this challenge by examining the relative difference in centrality values between the top-ranked holes and the majority of others. Significant differences in centrality would suggest a higher likelihood of those top-ranked holes being truly central features within the data.

Figure~\ref{fig:sierp_cent} depicts the centrality plots associated with the point cloud. To distinguish between our different centrality measures, we denote centrality functions $J_3$ with various scaling factors as $J_4$, $J_5$, and $J_6$. The function $J_4$ has factors equal to one, while the functions $J_4$ and $J_5$ prioritize late merging and early merging, respectively. 

\begin{figure}[ht]
\centering
\includegraphics[width=0.5\linewidth]{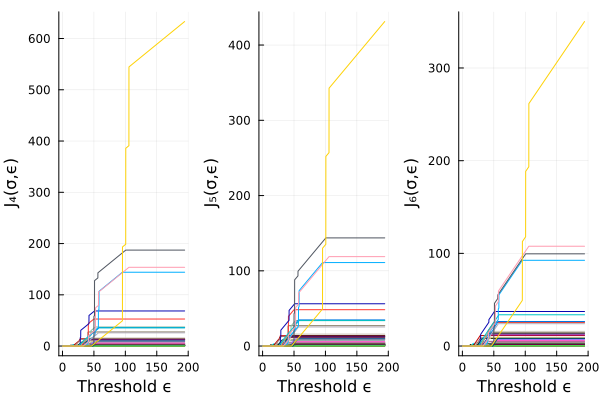}
\caption{Centrality plots of dimension $1$ associated to the Rips filtration of the point cloud in Figure \ref{fig:sierp}.}
\label{fig:sierp_cent}
\end{figure}

The centrality plots suggest the presence of a relatively important signal within the point cloud data. This is evidenced by the large difference in the maximum centrality value for the highest ranked hole compared to the others. Notably, this hole coincides with the one previously identified using the hypothesis testing framework. This strong agreement between centrality measures and the well-established topological tools found in persistence diagrams reinforces the significance of this particular hole as a key topological feature within the data. 

Further bolstering this observation, we can consider persistence values as a ranking system for holes. Computing the Spearman rank correlation coefficient \cite{spearman} between the maximum centrality values of $J_5$ and the persistence values yields a value of 0.997. This value indicates a near-perfect, monotonically increasing relationship between the two rankings. In simpler terms, holes ranked highly by centrality also tend to have high persistence values. By aligning with established methods like persistence diagrams, these findings suggest that centrality functions effectively capture features similar to those identified by common TDA summaries.

We examine other bootstrap sample (Figure~\ref{fig:sierp_boot1}) where the hypothesis testing framework fails to identify a signal. Interestingly, the persistence diagram (Figure~\ref{fig:sierp_diag}) exhibits a point far from the diagonal, potentially indicating a feature. However, this point is not classified as a signal by the hypothesis testing framework.

Despite the lack of a signal using traditional methods, both the persistence and the centrality functions, viewed as rankings, reveal some patterns. Mirroring the previous bootstrap sample, the highest ranked hole by both methods corresponds to points from the largest triangle in the fractal. Similarly, the second highest ranked hole aligns with points from the next largest triangle.

\begin{figure}[ht]
\centering
\begin{subfigure}{0.48\textwidth}
    \includegraphics[width=\linewidth]{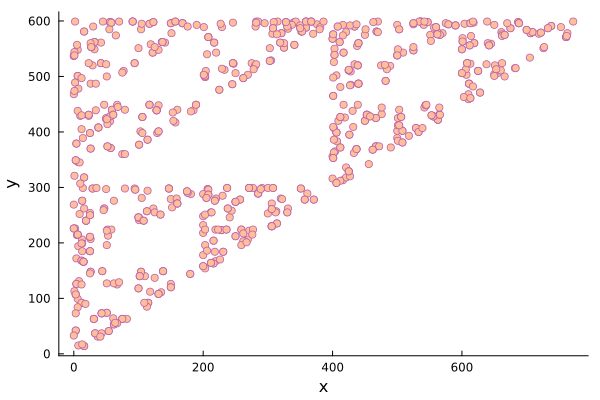}
    \caption{Point cloud}
    \label{fig:sierp_boot1}
\end{subfigure}
\hfill
\begin{subfigure}{0.48\textwidth}
    \includegraphics[width=\linewidth]{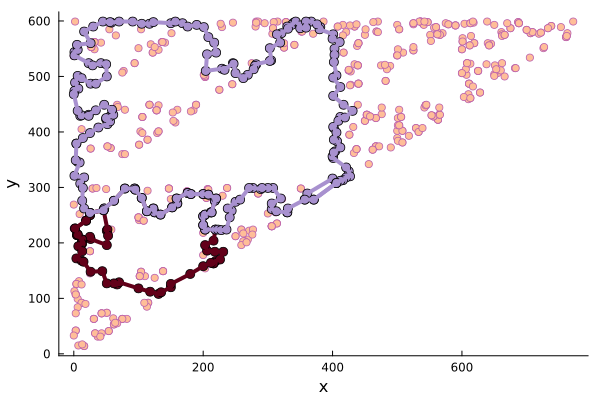}
    \caption{Two highest ranked holes}
    \label{fig:sierp_sig1}
\end{subfigure}
\hfill
\begin{subfigure}{0.48\textwidth}
    \centering
    \includegraphics[width=\linewidth]{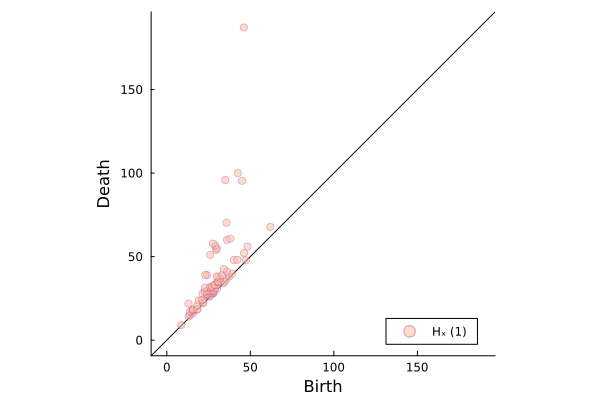}
    \caption{Persistence diagram of dimension $1$}
    \label{fig:sierp_diag}
\end{subfigure}
\hfill
\begin{subfigure}{0.48\textwidth}
    \centering
    \includegraphics[width=\linewidth]{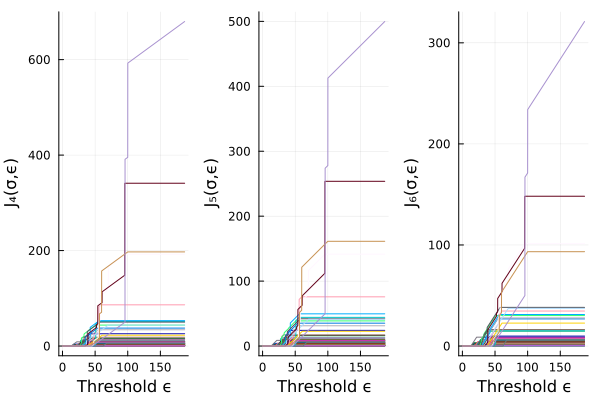}
    \caption{Centrality plots of dimension $1$}
    \label{fig:sierp_boot_cent}
\end{subfigure}
\caption{Two highest ranked holes according to the persistence diagram and the centrality plots from a bootstrap sample of the point cloud in Figure \ref{fig:sierp} with no identified signals.}
\end{figure}

To gain a deeper understanding, we delve into the values of both summaries and how they identify holes. We examine how many holes satisfy a threshold, defined as a percentage ($i$) of the maximum value from each summary. We will analyze a range of percentages for $i$ from $0.25$ to $1$. This analysis will be performed for both the persistence diagram and the centrality function $J_5$. Figure~\ref{fig:sierp_thresh} illustrates the graph of the number of holes identified by each method at different thresholds. The graph reveals interesting differences in how persistence and centrality identify holes. While $J_5$ identifies two holes meeting the threshold at $i = 0.5$, the persistence diagram identifies a single hole at a slightly lower threshold (around $0.4$). As we move to lower thresholds, the persistence diagram consistently identifies more holes than $J_5$ at each step. These observations suggest that centrality has the tendency to be selective in identifying features compared to the persistence diagram. These observations highlight the potential of centrality functions to capture features that might differ when using common topological summaries.

\begin{figure}[ht]
\centering
\begin{subfigure}{0.48\textwidth}
    \includegraphics[width=\linewidth]{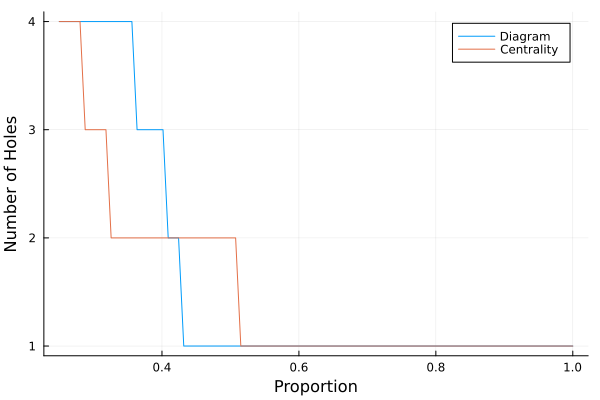}
    \caption{Number of holes satisfying the threshold}
    \label{fig:sierp_thresh}
\end{subfigure}
\hfill
\begin{subfigure}{0.48\textwidth}
    \includegraphics[width=\linewidth]{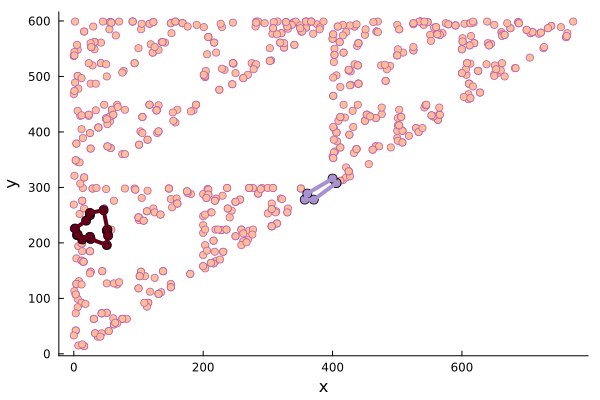}
    \caption{Earliest cycles across all orders of merge clusters}
    \label{fig:short}
\end{subfigure}
\caption{Additional features that the centrality functions capture from the point cloud in Figure \ref{fig:sierp_boot1}.}
\end{figure}

We can also identify and analyze the merging events across the entire filtration process. Figure~\ref{fig:short} shows the earliest homology class considering all possible orders of merging clusters in each of the two highest ranked holes. We note the small diameter of each hole. This observation suggests that some short-lived holes, present during the early stages of filtration, contribute to the overall centrality of the larger holes. Additionally, the centrality of these large holes appears to be influenced by merge instances originating from them much earlier in the filtration process.

\begin{figure}[ht]
\centering
\begin{subfigure}{0.49\textwidth}
    \centering
    \includegraphics[width=\linewidth]{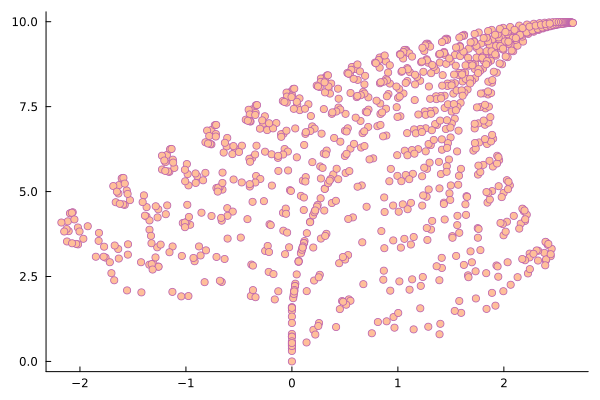}
    \caption{Point cloud}
    \label{fig:fern}
\end{subfigure}
\hfill
\begin{subfigure}{0.49\textwidth}
    \includegraphics[width=\linewidth]{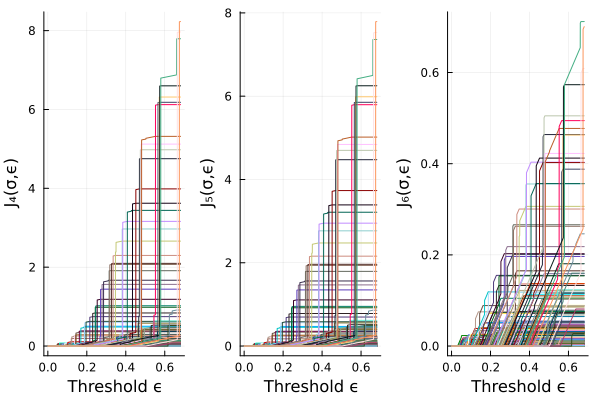}
    \caption{Centrality plots of dimension $1$}
    \label{fig:fern_diag}
\end{subfigure}
\caption{Centrality plots, of dimension $1$, produced by the Rips filtration of a point cloud sampled around the Barnsley Fern.}
\end{figure}

We conclude this section by examining a different point cloud, sampled around the Barnsley Fern fractal (Figure~\ref{fig:fern}). Here, we employ the same bootstrapping approach used previously. No signals were identified in any of the bootstrapped samples for the Barnsley Fern. The Spearman rank correlation between the maximum $J_5$ centrality values and the persistence values remains high (0.893), indicating a strong monotonically increasing relationship. Similar to the previous example, this suggests that centrality and persistence rankings tend to agree on the relative importance of holes despite the absence of identifiable signals. Furthermore, Figure~\ref{fig:map_fern} shows the three highest ranked holes identified by centrality. Extracting the earliest cycles across all possible orders of merge clusters for these three holes reveals a single cycle. In simpler terms, this implies that one hole merged with another, then the survivor merged with the third hole, forming a single connected component at some point during the filtration process. The representative of the single cycle further suggests that the corresponding hole might have played a central role in connecting the other two holes during the filtration process.

\begin{figure}[ht]
\centering
\begin{subfigure}{0.49\textwidth}
    \centering
    \includegraphics[width=\linewidth]{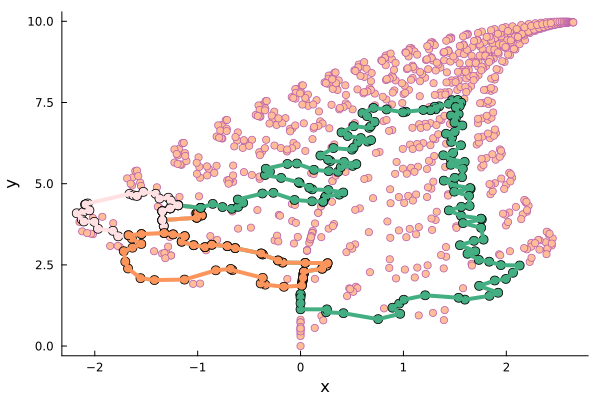}
    \caption{Three highest ranked holes}
    \label{fig:map_fern}
\end{subfigure}
\hfill
\begin{subfigure}{0.49\textwidth}
    \includegraphics[width=\linewidth]{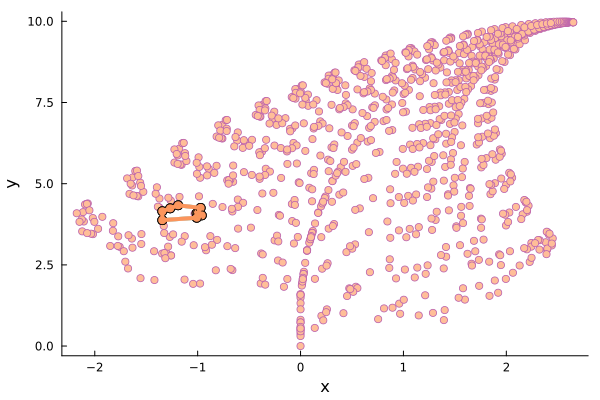}
    \caption{Earliest cycle across all orders of merge clusters}
    \label{fig:map1_fern}
\end{subfigure}
\caption{The earliest cycle across all orders of merge clusters of each of the three highest ranked holes}
\end{figure}

Similar to our previous analysis, we compared how centrality and persistence identify holes based on thresholds. Figure~\ref{fig:fern_number} illustrates the number of holes detected by each method at different threshold levels (represented as a percentage of the maximum persistence value). The plot reveals key differences in how these methods classify holes. At higher thresholds ($i > 0.6$), the centrality function $J_5$ identifies holes that meet the threshold. However, the persistence diagram identifies a larger number of holes, including those with lower persistence values ($i \leq 0.6$). This trend continues as we move towards lower thresholds. The persistence diagram consistently detects a higher number of holes (at least 50 holes at $i = 0.25$) compared to the centrality plot (which identifies at most 20 holes at $i = 0.25$). These observations reinforce our earlier finding that the centrality functions might be more selective in identifying features depending on a certain threshold.

\begin{figure}[ht]
\centering
\includegraphics[width=0.5\linewidth]{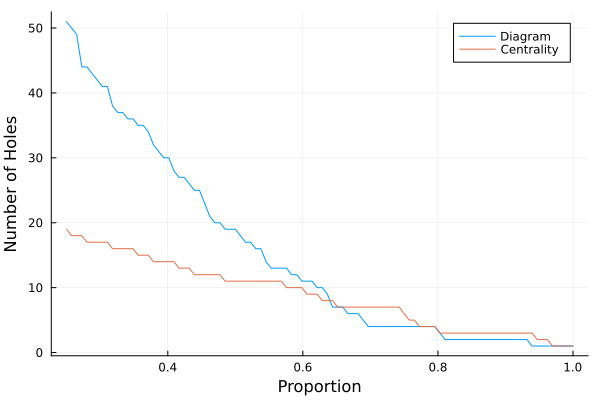}
\caption{Number of holes satisfying the threshold depending on the proportion of the maximum persistence and the maximum $J_5$ centrality values.}
\label{fig:fern_number}
\end{figure}

\section{Conclusion}

We introduced novel centrality measures that leverage both persistence and merge dynamics of homology classes. These measures aim to capture a more comprehensive picture of the topological structure within point cloud data compared to traditional summaries. The algorithm for computing the merge dynamics of homology classes is guided by the equivalence of merging and $q$-nearness between two classes. Similar to persistence barcodes, we have generated shape descriptors in the form of plots and heat maps and established stability by defining a pseudo-metric similar to bottleneck and landscape distances. Our initial investigation with self-similar point clouds has demonstrated agreement with existing TDA tools, while also revealing additional properties of important features.

Moving forward, we plan to assess the efficacy of these measures across diverse real-world point cloud datasets and within machine learning contexts. Furthermore, future research endeavors will focus on refining the centrality functions and investigating their mathematical properties in greater depth.

\section*{Acknowledgements}
John Rick Manzanares would like to express his gratitude to the Department of Science and Technology - Science Education Institute for supporting this work through the Accelerated Science and Technology Human Resource Development program. The same author would also like to sincerely thank Dr. Pawe\l \ D\l otko for his valuable suggestions and comments.

\printbibliography

\end{document}